\documentclass[journal]{IEEEtran}

\usepackage{hyperref,doi,url}
\hypersetup{colorlinks=true, linkcolor=blue, breaklinks=true, urlcolor=blue}

\usepackage[table]{xcolor}
\usepackage{graphicx}
\usepackage{amsmath}
\usepackage{amssymb}
\usepackage{amsthm}
\usepackage{bm}
\usepackage[compress]{cite}
\usepackage{enumitem}
\usepackage{dsfont}
\usepackage{bbm}
\usepackage{bbold}
\usepackage{indentfirst}
\usepackage{subfigure}
\usepackage{hhline}
\usepackage{mathtools}

\usepackage[prependcaption,colorinlistoftodos]{todonotes}

\usepackage{lineno}
\usepackage{multirow}
\usepackage{amsfonts}
\usepackage{textcomp}
\usepackage{stfloats}

\usepackage{tikz}
\usetikzlibrary{arrows.meta}
\tikzset{%
	>={Latex[width=2mm,length=2mm]},
	base/.style = {rectangle, rounded corners, draw=black,
		minimum width=2cm, minimum height=1cm,
		text centered, font=\sffamily},
	activityStarts/.style = {base, fill=blue!30},
	startstop/.style = {base, fill=red!30},
	activityRuns/.style = {base, fill=green!30},
	process/.style = {base, minimum width=2.5cm, fill=orange!15,
		font=\ttfamily},
}

\usepackage{centernot}
\usepackage{version,xspace}
\usepackage{environ}
\usepackage{blkarray}
\usetikzlibrary{automata,positioning,arrows,through}

\usepackage{threeparttable}

\usepackage{float}
\usepackage{indentfirst}
\usepackage{mathrsfs}
\usepackage[linesnumbered,ruled,vlined]{algorithm2e}

\usepackage{algorithmic}
\usepackage{tabularx}

\newtheorem{defi}{\textbf{Definition}}
\newtheorem{thom}[defi]{\textbf{Theorem}}
\newtheorem{asp}[defi]{\textbf{Assumption}}

\newtheorem{pro}[defi]{\textbf{Proposition}}
\newtheorem{lema}[defi]{\textbf{Lemma}}

\newtheorem{cor}[defi]{\textbf{Corollary}}

\newcommand{\defiref}[1]{Definition \ref{#1}}
\newcommand{\thomref}[1]{Theorem~\ref{#1}}
\newcommand{\aspref}[1]{Assumption~\ref{#1}}

\newcommand{\proref}[1]{Proposition \ref{#1}}

\newcommand{\lemaref}[1]{Lemma \ref{#1}}
\newcommand{\tabref}[1]{Table \ref{#1}}


\hyphenation{op-tical net-works semi-conduc-tor}

\begin{document}

\title{Policy Evaluation and Seeking for Multi-Agent Reinforcement Learning
  via Best Response}

\author{Rui~Yan,~\IEEEmembership{Student Member,~IEEE,} Xiaoming~Duan,~Zongying~Shi,~\IEEEmembership{Member,~IEEE,}~Yisheng~Zhong,
	Jason R. Marden	and Francesco~Bullo,~\IEEEmembership{Fellow,~IEEE}
	\thanks{The work of R. Yan, Z. Shi, and Y. Zhong was supported in part by
		the National Natural Science Foundation of China under Grant 61374034,
		and in part by China Scholarship Council. This work of X. Duan and
		F. Bullo was supported in part by the US Air Force Office of Scientific
		Research under award FA9550-15-1-0138.}
	\thanks{R. Yan, Z. Shi, and Y. Zhong are with the Department
		of Automation, Tsinghua University, Beijing 100084, China. {\tt\small
			yr15 @mails.tsinghua.edu.cn} and {\tt\small\{szy,zys-dau\}
			@mail.tsinghua.edu.cn}}
	\thanks{X. Duan and F. Bullo are with the
		Mechanical Engineering Department and the Center of Control, Dynamical
		Systems and Computation, UC Santa Barbara, CA 93106-5070, USA. {\tt\small\{xiaomingduan,bullo\}@ucsb.edu}}
	\thanks{J. R. Marden is with the Department of Electrical and Computer Engineering, UC Santa Barbara, CA 93106-5070, USA.
	{\tt\small jrmarden@ece.ucsb.edu}}
}

\maketitle

\IEEEpeerreviewmaketitle
\begin{abstract}
  This paper introduces two metrics (cycle-based and memory-based metrics),
  grounded on a dynamical game-theoretic solution concept called \emph{sink
    equilibrium}, for the evaluation, ranking, and computation of policies
  in multi-agent learning. We adopt strict best response dynamics (SBRD) to
  model selfish behaviors at a meta-level for multi-agent reinforcement
  learning. Our approach can deal with dynamical cyclical behaviors (unlike
  approaches based on Nash equilibria and Elo ratings), and is more
  compatible with single-agent reinforcement learning than $\alpha$-rank
  which relies on weakly better responses.
  We first consider settings where the difference between largest and
  second largest underlying metric has a known lower bound.  With this
  knowledge we propose a class of perturbed SBRD with the following
  property: only policies with maximum metric are observed with nonzero
  probability for a broad class of stochastic games with finite memory.
  We then consider settings where the lower bound for the difference is
  unknown.  For this setting, we propose a class of perturbed SBRD such
  that the metrics of the policies observed with nonzero probability differ
  from the optimal by any given tolerance. The proposed perturbed SBRD
  addresses the opponent-induced non-stationarity by fixing the strategies
  of others for the learning agent, and uses empirical game-theoretic
  analysis to estimate payoffs for each strategy profile obtained due to
  the perturbation.
\end{abstract}


\section{Introduction}
\emph{Motivation and problem description:} Multi-agent policy evaluation
and learning are long-standing challenges in multi-agent reinforcement
learning (MARL) since agents interact and learn in a complex environment
simultaneously through, competition such as in Go \cite{DS-AH-CM-AG-LS:16},
cooperation such as learning to communicate \cite{JF-IA-Nd-SW:16}, or some
mix of the two \cite{JL-VZ-ML-LG:17}. This paper introduces two metrics for
the evaluation and ranking of policies in multi-agent interactions,
grounded on a dynamical game-theoretic solution concept called \emph{sink
  equilibrium} first introduced by Goemans \emph{et al.} in
\cite{MG-VM-AV:05}. We aim at developing new learning paradigms with a
finite memory to achieve desirable convergence properties for certain
classes of stochastic games. More specifically, we formulate the
interactions of multiple agents by stochastic games and then analyze them
in meta-level as meta-games (i.e., normal-from games with unknown payoffs
\cite{PM-SO-MR-KT:20}). Instead of focusing on atomic decisions, meta games
abstract the underlying games and are centered around high-level agents'
interactions. Each strategy for an agent is a set of specific
hyperparameters or policy representations. Then, we propose two metrics for
the joint strategies based on the memory size of the learning system and
the notion of sink equilibria. We further design a class of MARL
algorithms, called perturbed strict best response dynamics (SBRD), such
that the frequently observed strategies in the learning system either have
the maximum underlying metrics or have metrics that are within any given
difference from the maximum, depending on the prior information about the
underlying games.

\emph{Literature review:} In single-agent RL, the long-term reward received
by the agent can be used to evaluate policies such as $Q$-learning
\cite{JCW-PD:92}. However, in MARL, the evaluation of policies becomes less
obvious as each agent aims to maximize its own reward and the reward itself
is also affected by the strategies of others. For multi-agent policy
evaluation, the Elo rating system \cite{AEE:78} is a predominant approach,
but it fails to handle intransitive relations between the learning agents
(e.g., the cyclic behavior in Rock-Paper-Scissors). On the other hand, the
Nash equilibrium in game theory is the most common tool for evaluating
strategies of non-cooperative agents. However, the recent work
\cite{SO-CP-GP:19} claims that there seems little hope of using the Nash
equilibrium in general large-scale games due to its limitations, such as
computational intractability, selection issues, and incompatibility with
dynamical systems. To address some of the issues above, Omidshafiei
\emph{et al.} \cite{SO-CP-GP:19} proposed an alternative evaluation
approach, called $\alpha$-rank, which leverages evolutionary game theory to
rank strategies in multi-agent games. Specifically, $\alpha$-rank defines
an irreducible Markov chain over joint strategies and constructs the
\emph{response graph} of the game. The ranking of the joint strategies is
obtained by computing the stationary distribution of the Markov chain.
Recent extensions of $\alpha$-rank can be found in
\cite{MR-SO-KT-JP:19,PM-SO-MR-KT:20}. Despite this progress, how to
evaluate multi-agent policies in a principled manner remains largely an
open question \cite{MR-SO-KT-JP:19}.

In traditional reinforcement learning, a single agent improves her strategy
based on the observations obtained by repeatedly interacting with the
stationary environment. In a simplest form of MARL, known as
\emph{independent RL} (InRL), each agent independently learns its own
strategy by directly using single-agent learning algorithms and treating
other agents as part of the environment \cite{MT:93}. In this case, the
environment is no longer stationary from the perspective of any individual
agent due to the independent and simultaneous learning. To tackle this
problem, people create a stationary environment for the learning agent by
fixing the strategies of other agents in previous works. For example, a
unified game-theoretic approach, known as \emph{policy-space response
  oracles} (PSRO), is introduced in \cite{ML-VZ-AG-AL-KT:17} and further
investigated in \cite{PM-SO-MR-KT:20}, where best responses to a
distribution over the policies of other agents are computed. In
\cite{GA-SY-17}, the authors present decentralized $Q$-learning algorithms
for the weakly acyclic stochastic games, where the environment keeps
stationary for the learning agent over each \emph{exploration phase}. The
above methodology in which a stationary environment is created mainly
focuses on the evolution or convergence of finite high-level strategies
rather than primitive actions. Thus, these games are abstracted as
normal-form games with unknown payoffs, also known as meta-games or
strategic-form games \cite{PM-SO-MR-KT:20,JS-PY-GA-JS-09}.

The idea of playing best responses to stationary environment highlighted
above has drawn significant attention in the area of learning in
strategic-form games. In the well-known fictitious play \cite{GWB:51}, at
each time step, each agent plays a best response to the empirical frequency
of the opponents' previous plays. In double oracle \cite{HBM-GJG-AB:03},
each agent plays a best response to the Nash equilibrium of its opponent
over all previous learned strategies. In adaptive play \cite{HPY:93}, each
agent can recall a finite number of previous strategies and at each time
step, each agent samples a fixed number of elements from its memory and
plays a best response to the sampled strategies. The best response dynamics
is a simple and natural local search method: an arbitrary agent is chosen
to improve its utility by deviating to its best strategy given the profile
of others \cite{BS-CE-SK-AR:18,RE-AF-MS-DW:13}.

Best-response based algorithms are guaranteed to converge to the set of
Nash equilibria in many games of interest, e.g., fictitious play and double
oracle in two-agent zero-sum games \cite{GWB:51,HBM-GJG-AB:03}, adaptive
play in potential games \cite{HPY:93} and best response dynamics in weakly
acyclic games \cite{HPY:04}. However, ensuring convergence to a Nash
equilibrium in multi-agent general-sum games is a great challenge
\cite{RE-AF-MS-DW:13}.  Moreover, although pure Nash equilibria (PNE) are
preferable to mixed Nash equilibria (MNE), (e.g., in meta-games
\cite{SO-CP-GP:19} and auctions \cite{MG-VM-AV:05}), PNEs may not exist,
e.g., in weighted congestion games \cite{MG-VM-AV:05}, generic games
\cite{ACC-DSL-AR-NRJ:13}, and multi-agent general-sum normal-form games.

Motivated by $\alpha$-rank \cite{SO-CP-GP:19}, we propose to use the
concept of sink equilibrium introduced in \cite{MG-VM-AV:05} to evaluate
policies in multi-agent settings. The sink equilibrium is a sink strongly
connected component (SSCC) in the \emph{strict best response graph}
associated with a game. The strict best response graph has a vertex set
induced by the set of pure strategy profiles, and its edge set contains
myopic strict best response by any individual agent. We propose two metrics
for strategy profiles based on sink equilibria: when a strategy belongs to
a sink equilibrium, it has a positive metric dependent on the property of
the sink equilibrium and the underlying metric; otherwise, it has zero
metric. Similar to $\alpha$-rank, sink equilibria include PNEs as a special
case and are guaranteed to exist \cite{MG-VM-AV:05}. Compared with a Nash
equilibrium, a sink equilibrium has several advantages: 1) it focuses on
pure strategy profiles which are preferable in meta-games; 2) it can deal
with dynamical and intransitive behaviors; 3) it is always finite. Most of
current works about sink equilibria focus on the computational complexity
and price of anarchy for various games
\cite{VSM-AS:09,AF-CHP:08,AB-MH-KL-AR:08}.

Like in Nash equilibrium selection problem, it is appealing to design
algorithms that converge to a sink equilibrium with some maximum underlying
metric. The strict best response dynamics, defined as a walk over the
strict best response graph, guarantees that for any initial strategy
profile, the convergence to a sink equilibrium is achieved
\cite{MG-VM-AV:05}. Similar to stochastic adaptive play \cite{HPY:93}, we
propose a class of perturbed strict best response dynamics such that the
strategies with the maximum metric are found with high probability. We
analyze the perturbed SBRD through its induced Markov chain and use
\emph{stochastic stability} as a solution concept, which is a popular tool
for studying stochastic learning dynamics in games
\cite{HPY:93,HB-JM-JS:19,JRM:17,HZ-ZM-CP-LP:19,GCC:19}. Similar to PSRO
\cite{ML-VZ-AG-AL-KT:17}, we use empirical game-theoretic analysis (EGTA)
to study strategies obtained due to the perturbation, through simulation in
complex games \cite{WW-RD-GT-JK:02}. In an empirical game, expected payoffs
for each strategy profile are estimated and recorded in an empirical payoff
table.

\emph{Contributions:} In this paper, we study policy evaluation and
learning algorithms for MARL. By adopting concepts from game theory, we
introduce two policy evaluation metrics with better properties than the Elo
rating system \cite{AEE:78}, Nash equilibria and $\alpha$-rank
\cite{SO-CP-GP:19}. By combining stochastic learning dynamics with
empirical game theory, we also propose a class of learning algorithms with
good convergence properties for MARL. The contributions of this paper are
as follows.
\begin{enumerate}
\item Based on sink equilibria induced by the strict best response graph,
  we introduce two policy evaluation metrics, namely the \emph{cycle-based
    metric} and the \emph{memory-based metric}, for strategy evaluation in
  multi-agent general-sum games.  Policies in the same sink equilibrium
  have the same underlying metric. We also establish the connections
  between the concepts of coarse correlated equilibria \cite{JRM:17}, PNEs
  and the sink equilibria.
  
\item Under the assumptions that the sink equilibrium with the maximum
  underlying metric is unique and the difference between the maximum and
  second largest metrics has a known lower bound, we propose a class of
  perturbed SBRD such that the policies with the maximum metric are
  observed frequently over a finite memory. Specifically, when the memory
  size is sufficiently large, we consider the cycle-based metric and give a
  lower bound of the memory size and a class of perturbation functions to
  guarantee convergence. When the memory size is predefined, we consider
  the memory-based metric and give a class of perturbation functions to
  guarantee convergence.

\item Furthermore, when the lower bound of the difference between the
  maximum and second largest metrics is unknown, we propose a class of
  perturbed SBRD such that the metrics of the policies observed with high
  probability are close to the maximum metric within any prescribed
  tolerance.
\end{enumerate}

\emph{Paper organization:} In Section \ref{PreSection}, we model the MARL
as a stochastic game and then at a high-level, reformulate it as a
meta-game by considering stationary deterministic policies. In Section
\ref{BRDSESection}, based on the sink equilibrium induced by strict best
response graph, two multi-agent policy evaluation metrics are
introduced. In Section \ref{PBRDSection}, we propose training algorithms to
seek the best policies according to both metrics, respectively. We conclude
the paper in Section~\ref{sec:conclusion}.

\emph{Notation:} For any finite set $S$, let $|S|$ denote its
cardinality. For any positive integer $z$, let $[z]$ denote the set of
positive integers smaller than or equal to $z$, i.e.,
$[z]=\{1,2,\cdots,z\}$. Let $\textup{rand}(S)$ denote an element uniformly
selected from a finite set $S$. Let $\mathbb{R}$, $\mathbb{R}_{\ge0}$,
$\mathbb{R}_{>0}$, $\mathbb{N}$ and $\mathbb{N}_{>0}$ be the set of reals,
nonnegative reals, positive reals, nonnegative integers and positive
integers, respectively.

\section{Preliminaries}\label{PreSection}
We present here preliminaries in game theory.

\subsection{Stochastic Games}\label{SectionSG}
Stochastic games have long been used in MARL to model interactions between agents in a shared, stationary environment for developing MARL algorithms \cite{MLL:94}. A Markov decision process (MDP) is a stochastic game with only one agent. A finite (discounted) stochastic game $G_1$ is a tuple $(N,\mathcal{X},\mathcal{A},P,\mathcal{R},\beta)$, where:
\begin{enumerate}
	\item $N$ is a finite set of $n$ agents;
	\item $\mathcal{X}$ is a finite set of states;
	\item $\mathcal{A}=\mathcal{A}^1\times\cdots\times\mathcal{A}^n$, where $\mathcal{A}^i$ is a finite set of actions available to agent $i$;
	\item $P:\mathcal{X}\times\mathcal{X}\times\mathcal{A}\mapsto[0,1]$ is the transition probability function, where $P(x'\,|\,x,a)$ is the probability of transitioning from state $x\in\mathcal{X}$ to state $x'\in\mathcal{X}$ under action profile $a\in\mathcal{A}$;
	\item $\mathcal{R}=\mathcal{R}^1\times\cdots\times\mathcal{R}^n$, where $\mathcal{R}^i:\mathcal{X}\times\mathcal{A}\mapsto\mathbb{R}$ is a real-valued immediate reward for agent $i$; 
	\item $\beta=\beta^1\times\cdots\times\beta^n$, where $\beta^i\in(0,1)$ is a discount factor for agent $i$. 
\end{enumerate}

Such a stochastic game induces a discrete-time controlled Markov process. At time $t\in\mathbb{N}$, the system is in state $x_t$, and each agent chooses an action $a^i_t$. Then, according to the transition probability $P(\cdot\,|\,x_t,a_t)$, the process randomly transitions to state $x_{t+1}$ and each agent receives an immediate payoff $\mathcal{R}^i(x_t,a_t)$, where $a_t:=(a^1_t,\dots,a^n_t)\in\mathcal{A}$ is the action profile at time $t$.

A policy (or strategy) for an agent is a rule of choosing an appropriate action at any time based on the agent's information. We will consider the stationary deterministic policies for all agents. A stationary deterministic policy $s^i$ for agent $i$ is a mapping from $\mathcal{X}$ to $\mathcal{A}^i$ \cite{GA-SY-17}, that is, $a_t^i=s^i(x_t)$, and the only information available to agent $i$ at time $t$ is the current state $x_t$. We denote the set of such policies by $\mathcal{S}^i$, which is finite with $|\mathcal{S}^i|=|\mathcal{X}|^{|\mathcal{A}^i|}$.

Let $\mathcal{S}=\mathcal{S}^1\times\cdots\times\mathcal{S}^n$ be the set of joint strategies, and $\mathcal{S}^{-i}=\times_{j\neq i}\mathcal{S}^j$ denotes the set of possible strategies of all agents except agent $i$. We also use the notation $s^{-i}\in\mathcal{S}^{-i}$ to refer to the joint strategy of all agents except agent $i$, and sometimes write the joint strategies as $s=(s^{-i},s^i)$ for any $s\in\mathcal{S}$ and $i\in N$.

Given any joint strategy $s\in\mathcal{S}$, each agent $i$ receives an expected long-term discounted reward
\begin{equation}
V_s^i(x)=\mathds{E}\Big(\sum_{t\in\mathbb{N}}(\beta^i)^t\mathcal{R}^i(x_t,a_t)\,|\,x_0=x\Big),\ \ \forall x\in\mathcal{X},
\end{equation}
where $a_t=(a_t^1,\dots,a_t^n)$ and $a_t^i\in\mathcal{A}^i$ is selected according to $s^i(x_t)$, for all $i\in N$ and all $t\in\mathbb{N}$. The function $V_s^i(x)$ is the value function of agent $i$ under the joint strategy $s$ with the initial state $x$. The objective of each agent $i$ is to find a policy $s^i\in\mathcal{S}^i$ that maximizes $V_s^i(x)$ for all $x\in\mathcal{X}$. If only agent $i$ is learning and the other agents' strategies are fixed, then the stochastic game $G_1$ reduces to an MDP and it is well-known that there exists a stationary deterministic policy for agent $i$ that achieve the maximum of $V_s^i(x)$ for all $x\in\mathcal{X}$ and any fixed $s^{-i}\in\mathcal{S}^{-i}$  \cite{OHL-JBL:96}.

\subsection{Meta Games} 
We are interested in analyzing interactions in $G_1$ at a higher meta-level. A meta game is a simplified model of complex interactions which focuses on meta-strategies (or styles of play) other than atomic actions \cite{KT-JP-ML-JL-TG:18,SO-CP-GP:19,WW-RD-GT-JK:02}. For example, meta-strategies in poker may correspond to "passive/aggressive" or "tight/loose" strategies.  The (finite) meta game $G_2$, induced by a stochastic game $G_1$ defined in Section \ref{SectionSG}, is a tuple $(N,\mathcal{S},J)$ such that:
\begin{enumerate}
	\item $N$ is a finite set of $n$ agents;
	\item $\mathcal{S}=\mathcal{S}^1\times\cdots\times\mathcal{S}^n$ is a finite joint policy space, where $\mathcal{S}^i$ is a finite set of stationary  deterministic policies available to agent $i$ with $|\mathcal{S}^i|=|\mathcal{X}|^{|\mathcal{A}^i|}$;
	\item $J=J^1\times\cdots\times J^n$, where $J^i:\mathcal{S}\mapsto\mathbb{R}$ is a payoff function of agent $i$ defined as the average of the sum of the value functions $V_s^i(x)$ over all $x\in\mathcal{X}$, i.e., for any $s\in\mathcal{S}$,
	\begin{equation*}
	J^i(s)=\frac{1}{|\mathcal{X}|}\sum_{x\in\mathcal{X}}V^i_s(x),
	\end{equation*}
	and $J(s)=(J^1(s),\dots,J^n(s))^\top$.
\end{enumerate}

The objective of each agent $i$ is to find a policy $s^i\in\mathcal{S}^i$
that maximizes $J^i(s)$. We focus on the meta game $G_2$ in this
paper. Since different agents may have different payoff functions and each
agent's payoff also depends on the strategies of the other agents, we here
adopt the notion of equilibrium to characterize those policies that are
\emph{person-by-person optimal} \cite{GA-SY-17}.

A standard approach in analyzing the performance of systems controlled by non-cooperative agents is to examine the Nash equilibria. Note that in the stochastic game $G_1$ or its induced meta game $G_2$, a strategy $s^i$ can be learned by a machine learning agent $i$ and the function $V_s^i(x)$ captures the expected reward of agent $i$ when playing against the others in some game domain. The PNE for the meta game $G_2$ is defined as follows.

\begin{defi}[Pure Nash equilibrium]
	A strategy profile $s^*\in\mathcal{S}$ is a pure Nash equilibrium (PNE) if $J^i(s^*)\ge J^i(s^i,s^{*-i})$ for all strategies $s^i\in\mathcal{S}^i$ and all agents $i\in N$.
\end{defi}
It is known that pure Nash equilibria do not exist in many games \cite{MG-VM-AV:05}. Specifically, for games where multiple agents learn simultaneously, there is usually no prior information about the structure of payoff functions. Thus, it is not guaranteed that a PNE exists in these games and the PNE may not be an appropriate solution concept to evaluate policies.

\section{Strict Best Response Dynamics and Sink Equilibrium}\label{BRDSESection}
In this section, we introduce the strict best response dynamics and the sink equilibrium, as a preparation for the evaluation of multi-agent policies in the next section. 
\subsection{Motivation}
The  progress made on reinforcement learning has opened the way for creating autonomous agents that can learn by interacting with surrounding unknown environments \cite{DS-AH-CM-AG-LS:16,ML-VZ-AG-AL-KT:17,JF-IA-Nd-SW:16,GA-SY-17}. For single-agent reinforcement learning, the interactions between the agent and the stationary environment are often modeled by an MDP, i.e., the same action of the agent from the same state yields the same (possibly stochastic) outcomes. This is a fundamental assumption for many well-known reinforcement learning algorithms, such as $Q$-learning \cite{RS-AB:18} and deep $Q$-network \cite{VM-KK-DS-AG-IA:13}. However, for multiple independent learning agents, the environment is not stationary from each individual agent's perspective and the effects of an action also depend on the actions of other agents \cite{MT:93}. As a result, good properties of many single-agent reinforcement learning algorithms do not hold in this case.

We deal with this non-stationarity in multi-agent learning by creating a stationary environment for each learning agent. Specifically, we only allow one agent to learn at each learning phrase and fix all the other agents' strategies, as in \cite{GA-SY-17,PM-SO-MR-KT:20,SO-CP-GP:19}. Under this framework, any single-agent reinforcement learning algorithm can be adopted. Moreover, by focusing on the meta-level (meta game $G_2$), we use a game-theoretic concept called \emph{strict best response dynamics} \cite{MG-VM-AV:05,BS-CE-SK-AR:18} to analyze the multi-agent learning process and further introduce the \emph{sink equilibrium} \cite{MG-VM-AV:05}.

\subsection{Strict Best Response Dynamics and Sink Equilibrium}
A strategy $s^{*i}\in\mathcal{S}^i$ is called a best response of agent $i$ to a strategy profile $s^{-i}\in\mathcal{S}^{-i}$ if
\begin{equation*}
J^i(s^{*i},s^{-i})=\max_{s^i\in\mathcal{S}^i}J^i(s^{i},s^{-i}).
\end{equation*}
Note that for any fixed $s^{-i}\in\mathcal{S}^{-i}$, agent $i$ solves a stationary MDP problem. Moreover, since $\mathcal{X}$ and $\mathcal{S}^i$ are both finite, agent $i$ always has at least one stationary deterministic best response to any $s^{-i}\in\mathcal{S}^{-i}$ \cite{OHL-JBL:96}. We denote the set of stationary deterministic best responses by
\begin{equation*}
B^i(s^{-i})=\big\{s^{*i}\in\mathcal{S}^i\,|\,J^i(s^{*i},s^{-i})=\max_{s^i\in\mathcal{S}^i}J^i(s^i,s^{-i})\big\},
\end{equation*}
 which is nonempty and finite.

\begin{figure}
	\centering
	\subfigure{
	\includegraphics[width=35mm,height=16mm]{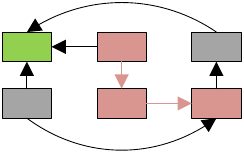}
	\put(-97,29.5){\scriptsize$a_1b_1$}
	\put(-57.5,29.5){\scriptsize$a_1b_2$}
	\put(-19,29.5){\scriptsize$a_1b_3$}
	\put(-97,12.5){\scriptsize$a_2b_1$}
	\put(-57.5,12.5){\scriptsize$a_2b_2$}
	\put(-19,12.5){\scriptsize$a_2b_3$}
	\put(-55,-7){\scriptsize$(a)$}
}
\subfigure{
		\includegraphics[width=43mm,height=14mm]{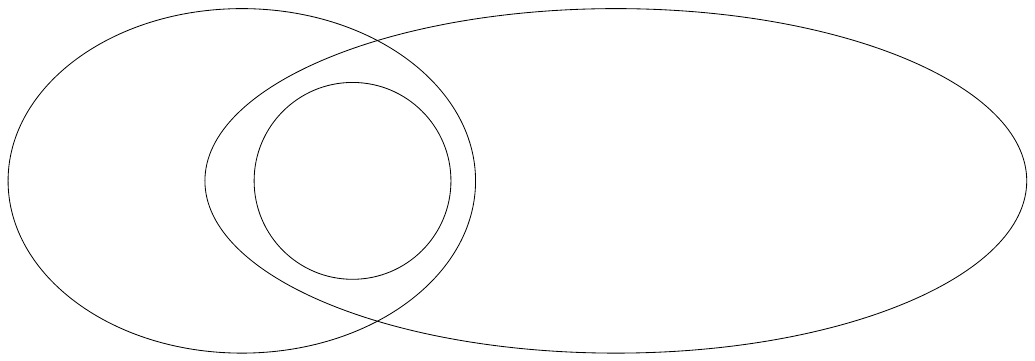}
		\put(-117,17){\footnotesize{CCE}}
		\put(-88,17){\footnotesize{PNE}}
		\put(-61,18){\footnotesize{sink equilibrium}}
		\put(-70,-7){\scriptsize$(b)$}
	}
	\caption{$(a)$ A strict best response graph for a meta game with two agents, where one of them has two strategies $\{a_1,a_2\}$ and the other has three strategies $\{b_1,b_2,b_3\}$. The sequence $((a_1,b_2),(a_2,b_2),(a_2,b_3))$ is a strict best response path (SBRP) and $\{(a_1,b_1)\}$ is the unique sink equilibrium. $(b)$ Relationships between different concepts of equilibrium.}
	\label{CCE_SE_figure}
\end{figure}

\begin{defi}[Strict best response graph]
	A strict best response graph $\mathcal{G}_{\textup{S}}=(\mathcal{S},\mathcal{E}_{\textup{S}})$ of a meta game $G_2$ is a digraph where each node represents a joint strategy profile $s\in\mathcal{S}$ and a directed edge $e_{ss'}$ from $s\in\mathcal{S}$ to $s'\in\mathcal{S}$ exists in $\mathcal{E}_{\textup{S}}$  if $s$ and $s'$ differ in exactly one agent's strategy (say $i\in N$) and $s'^{i}$ is a  best response with respect to $s^{-i}$, i.e., $s'^{i}\in B^i(s^{-i})$, and $J^i(s')>J^i(s)$.
\end{defi}

\begin{defi}[Strict best response path]
	A finite sequence $\mathcal{L}$ of joint strategies $(s_1,s_2,\dots,s_c)$ $(c\in\mathbb{N}_{>0})$ is called a strict best response path (SBRP) if, for each $\tau$\footnote{In the next section, we will formally define $\tau$ as exploration phrase.}, one of the two conditions holds: $1)$ $s_{\tau}$ is a PNE; $2)$ there is an edge from $s_{\tau}$ to $s_{\tau+1}$ in the strict best response graph $\mathcal{G}_{\textup{S}}$.
\end{defi}

For simplicity, we write $s\in \mathcal{L}$ when a joint strategy $s$ is on an SBRP $\mathcal{L}$. An SBRP $\mathcal{L}$ is a \emph{directed cycle}, if the only repeated nodes on $\mathcal{L}$ are the first and last nodes, or all nodes on $\mathcal{L}$ are the same PNE. An example of the strict best response graph is shown in Fig. \ref{CCE_SE_figure}(a), where the sequence in red is an SBRP. 

A \emph{strict best response dynamics (SBRD)}, also called strict Nash dynamics, is a walk on the strict best response graph \cite{MG-VM-AV:05}. In the rest of this paper, we stick with the random strict best response dynamics: at a given joint strategy, each agent is equally likely to be selected to play a strict best-response strategy \cite{HPY:04} if it exists. Moreover, the strict best response strategy of the agent is obtained by a single-agent reinforcement learning algorithm with all other agents' strategies fixed, and we assume that every strict best response strategy of the agent is selected with equal probability.  

Next, we introduce the concept of \emph{sink equilibrium} first proposed in \cite{MG-VM-AV:05} with a little modification, to examine the performance of a broad class of meta games, including those without pure Nash equilibria.

\begin{defi}[Sink equilibrium]\label{SinkDefi}
	A set $Q\subset\mathcal{S}$ of joint policies is a sink equilibrium of a strict best response graph $\mathcal{G}_{\textup{S}}=(\mathcal{S},\mathcal{E}_{\textup{S}})$, if $Q$ is a sink strongly connected component (SSCC) in the graph.
\end{defi}

Let the set of all sink equilibria be $\mathcal{Q}\subset2^{\mathcal{S}}$. All sink equilibria characterize all policies that are visited with non-zero probability after a sufficiently long random strict best-response sequence. Any random strict best-response sequence converges to a sink equilibrium with probability one \cite{MG-VM-AV:05}. In Fig. \ref{CCE_SE_figure}(a), $\{(a_1,b_1)\}$ is the unique sink equilibrium.   

\section{Sink Equilibrium Characterization and Multi-Agent Policy Evaluation}
In this section, we present our contribution on the properties of sink equilibrium and introduce two multi-agent policy evaluation metrics. 

\subsection{Sink equilibrium characterization}
The condensation digraph of any digraph is acyclic and every acyclic digraph has at least one sink \cite{FB:19}. Thus, it follows from \defiref{SinkDefi} that the sink equilibrium always exists in finite meta games, like (mixed) Nash equilibrium, i.e., $\mathcal{Q}$ is nonempty and finite. Moreover, sink equilibria generalize pure Nash equilibria in that a PNE is a singleton sink equilibrium. 

Next, we discuss the relationships between sink equilibrium and coarse
correlated equilibrium (CCE). The CCE, which has been broadly studied in
the area of learning in games \cite{JRM:17,HB-JM-JS:19}, is more general
than PNE as Fig. \ref{CCE_SE_figure}(b) shows. The CCE always exists in finite games, because the (mixed) Nash equilibrium is included in the CCE. Let
$q=\{q_s\}_{s\in\mathcal{S}}\in \Delta(\mathcal{S})$ be a probability
distribution over the joint strategy space $\mathcal{S}$, where
$\Delta(\mathcal{S})$ denotes the simplex over $\mathcal{S}$. A joint
strategy $s$ is in the \emph{support} of $q$ if $q_s>0$. In the next, when referring to the CCE, we focus on its support.

\begin{defi}[Coarse correlated equilibrium]\label{CorrelatedEqui}
	A probability distribution $q\in\Delta(\mathcal{S})$ is a coarse correlated equilibrium (CCE) if $\sum_{s\in\mathcal{S}}J^i(s)q_s\ge\sum_{s\in\mathcal{S}}J^i(s'^{i},s^{-i})q_s$ for all strategies $s'^{i}\in\mathcal{S}^i$ and all agents $i\in N$.
\end{defi}


The following proposition establishes the connections between CCE, PNE and the sink equilibrium.
\begin{pro}[Connection with the CCE]\label{ConnectionCCE}
	In finite meta games, sink equilibria and coarse correlated equilibria satisfy:
	\begin{enumerate}[label=(\roman*)]
		\item\label{itm:CCE1} 	There exist meta games such that any sink equilibrium is not equal to the support of any CCE;
		\item\label{itm:CCE2} There exist meta games such that the support of a CCE is not a subset of any sink equilibrium;
		\item The\label{itm:CCE3}re exist meta games such that the PNE is a proper subset of the intersection of the support of a CCE and a sink equilibrium.
	\end{enumerate}
\end{pro}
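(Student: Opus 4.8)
The plan is to prove each of the three claims by exhibiting an explicit finite meta game, since every statement is purely existential. For a candidate game I would specify the payoff bimatrix $J$ directly (meta-game payoffs are arbitrary reals, so no structural constraint restricts the construction), then read off the two objects to be compared: the sink equilibria, obtained by drawing the strict best response graph $\mathcal{G}_{\textup{S}}$ and extracting its sink strongly connected components, and the coarse correlated equilibria, characterized by the finite system of linear inequalities in \defiref{CorrelatedEqui}. The comparison in each part is then a finite check, so essentially all of the work is in choosing the game and in the combinatorial/polyhedral bookkeeping.

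Before constructing anything I would record one reduction that shapes all three parts: a singleton sink equilibrium has no outgoing edge and is therefore a PNE, and the point mass on a PNE satisfies the CCE inequalities, so every singleton sink equilibrium already equals the support of some CCE. Consequently any game witnessing \ref{itm:CCE1} must have no PNE at all, forcing its sink equilibria to be genuine cycles; the same observation explains why in \ref{itm:CCE3} a PNE cannot lie strictly inside a larger cyclic sink equilibrium, so the nontrivial content of \ref{itm:CCE3} has to come from the overlap of a cyclic sink equilibrium with a CCE support rather than from a PNE embedded in a cycle.

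For \ref{itm:CCE2} and \ref{itm:CCE3} the task is comparatively easy, since only the existence of one CCE is required. For \ref{itm:CCE2} I would build a game possessing a profile $s_0$ that is transient under the dynamics (some agent has a strict improving deviation, so $s_0$ lies in no SSCC) yet receives positive weight in some CCE; the two notions are logically independent, because a CCE balances deviation incentives only on average over the correlated recommendation, whereas transience needs only a single improving unilateral move. A convenient realization is a game in which the uniform distribution (or the fully mixed Nash equilibrium, which is always a CCE) is a coarse correlated equilibrium while the strict best response graph still funnels into a proper sink cycle; then $\mathrm{supp}(q)$ contains transient profiles and is not contained in any sink equilibrium. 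For \ref{itm:CCE3} I would exhibit a game together with a CCE $q$ and a sink equilibrium $Q$ whose intersection $\mathrm{supp}(q)\cap Q$ strictly contains the PNE set, which a cyclic sink equilibrium overlapping a correlated-equilibrium support already provides, showing that the three concepts can agree on a set strictly larger than the PNE set.

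The hard part is \ref{itm:CCE1}, because it is universally quantified: I must produce a game in which no CCE support equals any sink equilibrium. By the reduction above I would work with a PNE-free game, so that every sink equilibrium is a cycle, and—to avoid the trap that in a $2\times2$ cyclic game the mixed Nash equilibrium is a full-support CCE whose support is exactly the $4$-cycle—I would take a larger game whose unique sink equilibrium $Q$ is a proper subset of $\mathcal{S}$. The crux is then to show that $Q$ is not the support of any CCE, i.e., that every distribution supported on $Q$ violates some agent's CCE inequality. I would engineer the payoffs so that for a fixed agent $i$ the opponents' strategies appearing across all profiles of $Q$ make one particular action $b\in\mathcal{S}^i$, realized only in profiles outside $Q$, strictly better than $i$'s recommended play, uniformly enough that $\sum_{s}J^i(s)q_s\ge\sum_s J^i(b,s^{-i})q_s$ fails for every $q$ with $\mathrm{supp}(q)\subseteq Q$. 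The main obstacle, and where the construction must be designed most carefully, is making this deviation incentive unavoidable over the entire face of the CCE polytope carried by $Q$ while simultaneously keeping the profiles involving $b$ transient, so that $Q$ remains the sink equilibrium and no new PNE is created; reconciling these two competing requirements for the chosen bimatrix is the substance of the proof.
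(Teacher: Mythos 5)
Your plans for parts (ii) and (iii) are sound and essentially the paper's own route: for (ii) the paper exhibits a mixed Nash equilibrium (hence a CCE) whose support $\{(a_3,b_1),(a_3,b_2)\}$ contains a transient profile while the unique sink equilibrium is the singleton $\{(a_3,b_2)\}$, and for (iii) it takes a PNE-free cyclic $2\times2$ subgame in which the uniform distribution is a CCE supported exactly on the $4$-cycle sink equilibrium, so the (empty) PNE set is properly contained in the intersection. Your preliminary reduction---that any witness for (i) must be PNE-free, since a singleton sink equilibrium is a PNE and its point mass is a CCE---is correct and is implicitly respected by the paper's example.

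The genuine gap is the core step of your plan for (i). You fix a single agent $i$ and a single deviation $b$ and ask that the CCE inequality for $b$ fail for \emph{every} $q$ with $\mathrm{supp}(q)\subseteq Q$. Since $q\mapsto\sum_{s}\bigl(J^i(b,s^{-i})-J^i(s)\bigr)q_s$ is linear and the point masses are the vertices of the simplex over $Q$, this is equivalent to pointwise dominance: $J^i(b,s^{-i})>J^i(s)$ for all $s\in Q$. But pointwise dominance is incompatible with $Q$ being a sink equilibrium: for any $s\in Q$ it gives $\max_{s^i\in\mathcal{S}^i}J^i(s^i,s^{-i})\ge J^i(b,s^{-i})>J^i(s)$, so the strict best response graph has an edge from $s$ to some $(s^{*i},s^{-i})$ with $s^{*i}\in B^i(s^{-i})$; sink-ness forces $(s^{*i},s^{-i})\in Q$, and applying your dominance at that profile yields $J^i(b,s^{-i})>J^i(s^{*i},s^{-i})=\max_{s^i\in\mathcal{S}^i}J^i(s^i,s^{-i})\ge J^i(b,s^{-i})$, a contradiction. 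So the two requirements you describe as ``competing'' are in fact mutually exclusive, and no payoff matrix realizes your scheme. The repair---which is exactly what the paper does---is to empty the CCE polytope over $Q$ by intersecting half-space constraints coming from \emph{two different agents}: in the $3\times3$ game of Fig.~\ref{PayoffMatrix}, whose unique sink equilibrium is the $4$-cycle $Q=\{(a_1,b_2),(a_2,b_2),(a_2,b_3),(a_1,b_3)\}$, the row deviation to $a_3$ forces $y_2+y_4\ge 1-\epsilon$ while the column deviation to $b_1$ forces $y_1+y_3\ge 1-\epsilon$, and the two sum to the contradiction $1\ge 2(1-\epsilon)$ for $\epsilon<\tfrac{1}{2}$. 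Note that neither inequality alone is violated by all $q$ supported on $Q$, consistent with the impossibility argument above.
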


\begin{proof}
	Regarding~\ref{itm:CCE1}, we construct a meta game as in Fig. \ref{PayoffMatrix} with $0<\epsilon<1$, where row and column agents aim to maximize their payoffs. We draw the associated strict best response graph on the right according to the payoff matrix. This game has a unique sink equilibrium (four joint strategies in green):
	\begin{equation*}
	Q=\big\{(a_1,b_2),(a_2,b_2),(a_2,b_3),(a_1,b_3)\big\}.
	\end{equation*}
	
	Suppose that there is a CCE $q$ such that the sink equilibrium $Q$ is equal to the support of $q$, i.e., $q_s>0$ if and only if $s\in Q$. Let
	\begin{equation*}
	q_{(a_1,b_2)}=y_1,\,q_{(a_2,b_2)}=y_2,\,q_{(a_2,b_3)}=y_3,\,q_{(a_1,b_3)}=y_4,
	\end{equation*}
	where $y_i>0$ for $i\in\{1,2,3,4\}$ and $\sum_{i=1}^4y_i=1$.
	We take $s'^1=a_3$ and $s'^2=b_1$ in \defiref{CorrelatedEqui}. If $q$ is a CCE, then we have
	\begin{equation}\begin{aligned}\label{CorrelatedEqu}
	&y_2+y_4\ge (1-\epsilon)(y_1+y_2+y_3+y_4)=(1-\epsilon),\\
	&y_1+y_3\ge (1-\epsilon)(y_1+y_2+y_3+y_4)=(1-\epsilon),
	\end{aligned}
	\end{equation}
	which leads to a contradiction $\sum_{i=1}^4y_i>1$ for $0<\epsilon<\frac{1}{2}$. Therefore, the sink equilibrium is not equal to the support of CCE.
	
	Regarding \ref{itm:CCE2}, consider a meta game whose payoff matrix is a part of the payoff matrix in Fig.~\ref{PayoffMatrix}, where player one has strategies $\{a_1,a_3\}$ and player two has strategies $\{b_1,b_2\}$. Take $\epsilon=\frac{1}{3}$ and then $(0,1)$ for the row agent and $(\frac{2}{5},\frac{3}{5})$ for the column agent form an MNE which is support by $\{(a_3,b_1),(a_3,b_2)\}$. Note that this meta game only has a unique sink equilibrium $
	\{(a_3,b_2)\}$. Thus, the conclusion follows from the fact that the MNE is contained in the CCE \cite{JRM:17}. 

	Regarding \ref{itm:CCE3}, consider a meta game whose payoff matrix is a part of  the payoff matrix in Fig. \ref{PayoffMatrix} where only four strategies $a_1,a_2,b_2$ and $b_3$ are involved. It can be verified that $(\frac{1}{4},\frac{1}{4},\frac{1}{4},\frac{1}{4})$ is a CCE. Since these four joint strategies form a sink equilibrium, then we can claim that the PNE is a proper subset of the intersection set of sink equilibrium and CCE.
\end{proof}

The sink equilibrium and CCE are not special cases of each other and the PNE is a proper subset of their intersection set from \proref{ConnectionCCE}. The relationship between sink equilibrium and CCE is shown in Fig. \ref{CCE_SE_figure}(b). Thus, the learning algorithms in games for seeking the CCE cannot be directly applied for seeking sink equilibria with good properties. 

\begin{figure}
	\scriptsize
	\begin{tabular}{c|c|c|c|}
		\multicolumn{1}{c}{}& \multicolumn{1}{c}{$b_1$} & \multicolumn{1}{c}{$b_2$} & \multicolumn{1}{c}{$b_3$}\\
		\hhline{~|-|-|-}
		$a_1$&$1,\,1-\epsilon$ & 
		$0,\,1$ &  
		$1,\,0$\\
		\hhline{~|-|-|-}
		
		$a_2$&$0,\,1-\epsilon$ &
		$1,\,0$ &
		$0,\,1$\\ \hhline{~|-|-|-}
		$a_3$&$0,\,0$ & $1-\epsilon,\,0$ & $1-\epsilon,\,1$
		\\
		\hhline{~|-|-|-}
	\end{tabular}
	\put(-2,-31){\subfigure{
			\includegraphics[width=40mm,height=20mm]{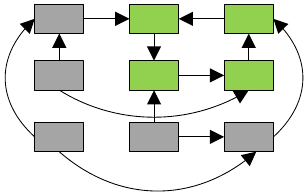}
			\put(-29,14.5){\scriptsize$a_3b_3$}
			\put(-65,14.5){\scriptsize$a_3b_2$}
			\put(-99.5,14.5){\scriptsize$a_3b_1$}
			\put(-29,32.5){\scriptsize$a_2b_3$}
			\put(-65,32.5){\scriptsize$a_2b_2$}
			\put(-99.5,32.5){\scriptsize$a_2b_1$}
			\put(-99.5,49){\scriptsize$a_1b_1$}
			\put(-65,49){\scriptsize$a_1b_2$}
			\put(-29,49){\scriptsize$a_1b_3$}
	}}
	\caption{A meta game with two agents, where the left is the payoff matrix and the right is the associated strict best response graph. The four joint strategies in green form a unique sink equilibrium of the left graph.}
	\label{PayoffMatrix}
\end{figure}

\subsection{Multi-agent policy evaluation}
\label{MEvaluationSec}
The problem of multi-agent policy evaluation is challenging due to several reasons: strategy and action spaces of agents quickly explode (e.g., multi-robot systems \cite{RY-ZS-YZ:20-2,RY-ZS-YZ:19}), models need to deal with intransitive behaviors (e.g. cyclic best-responses in Rock-Paper-Scissors, but at a much higher dimension \cite{SO-CP-GP:19}), types of interactions between agents may be complex (e.g., MuJoCo soccer), and payoffs for agents may be general-sum and asymmetric. 

Motivated by the recently-introduced $\alpha$-rank
strategy\cite{SO-CP-GP:19}, which essentially defines a walk on a perturbed
weakly better response graph, we use a walk on the strict best response
graph, i.e., SBRD, to evaluate the policies in multi-agent
settings. Similar to Nash equilibria, even simple games may have multiple
sink equilibria. Thus, we need to solve the sink equilibrium selection
problem.

We introduce two evaluation metrics for joint strategies and sink
equilibria in the following. Define the performance of a joint strategy
$s\in\mathcal{S}$ by
\begin{equation}\label{JointMeasure}
W(s)=\sum_{i\in N}w^iJ^i(s),
\end{equation}
where $w^i\in\mathbb{R}_{\ge0}$ is the weight associated with agent $i$ and $\sum_{i=1}^nw^i=1$. If we only care about the performance of agent $i$, then we can take $w^i=1$ and $w^j=0$ for all $j\neq i$. If we treat every agent equally and care about the average performance, then we can take $w^i=\frac{1}{n}$ for all $i\in N$. The performance of an SBRP $\mathcal{L}$ is defined as the average performance of all joint strategies in $\mathcal{L}$, i.e.,
\begin{equation}\label{PathMeasure}
W(\mathcal{L})=\frac{1}{|\mathcal{L}|}\sum_{s\in \mathcal{L}}W(s).
\end{equation}

We first introduce a cycle-based metric for policy evaluation. For a sink equilibrium $Q\in\mathcal{Q}$, let $\mathcal{C}(Q)$ be the set of directed cycles in the subgraph induced by $Q$. Moreover, if $Q=\{s\}$ is a singleton, then $\mathcal{C}(Q)=\{s\}$.


\begin{defi}[Cycle-based metric]\label{CBMeasure}
The cycle-based metric $M_c(s)$ of a strategy
  $s\in\mathcal{S}$ is the worst performance of all directed cycles in
  $\mathcal{C}(Q)$ if $s\in Q$ for some sink equilibrium $Q\in\mathcal{Q}$
  and $0$ otherwise. In other words, 
  \begin{equation}\label{CycleMeasureEqu}
    M_c(s)=
    \begin{cases}
      \min_{\mathcal{L}\in\mathcal{C}(Q)}W(\mathcal{L}) & 
      \text{if $s\in Q$ for some $Q\in\mathcal{Q}$;}\\
      0 & otherwise.
    \end{cases}
  \end{equation}
  Furthermore, the cycle-based metric for a sink equilibrium
  $Q\in\mathcal{Q}$ is defined by $M_c(Q)=M_c(s)$ for any $s\in Q$.
\end{defi}

We next introduce a memory-based metric for policy evaluation in the case when we can only store $m\in\mathbb{N}_{>0}$ joint strategies. For a sink equilibrium $Q\in\mathcal{Q}$, let $\mathcal{M}(Q)$ be the set of SBRPs of length $m$ in the subgraph induced by $Q$.

\begin{defi}[Memory-based metric]\label{MBMeasure}
  Let $m\in \mathbb{N}_{>0}$ be the memory length. The memory-based metric $M_m(s)$ of a strategy
  $s\in\mathcal{S}$ is the worst performance of all SBRPs of length $m$ in
  $\mathcal{M}(Q)$ if $s\in Q$ for some sink equilibrium $Q\in\mathcal{Q}$
  and $0$ otherwise. In other words, 
	\begin{equation}\label{MemoryMeasureEqu}
	M_m(s)=
	\begin{cases}
	\min_{\mathcal{L}\in\mathcal{M}(Q)}W(\mathcal{L})& \text{if $s\in Q$ for some $Q\in\mathcal{Q}$;}\\
	0& otherwise.
	\end{cases}
	\end{equation}
	Furthermore, the memory-based metric for a sink equilibrium $Q\in\mathcal{Q}$ is defined by $M_m(Q)=M_m(s)$ for any $s\in Q$.
\end{defi} 

The reason that we consider the worst performance for each sink equilibrium in Definitions \ref{CBMeasure} and \ref{MBMeasure} is that the random strict best response dynamics can move along any SBRP in a sink equilibrium and the metrics give lower bound on the achievable performance. Since $\mathcal{Q}$ is finite and nonempty, we can rank all sink equilibria $Q\in\mathcal{Q}$ by the metrics $M_c(Q)$ or $M_m(Q)$ and would like the SBRD to converge to the sink equilibrium with the best performance.

\section{Multi-Agent Policy Seeking}\label{PBRDSection}
We next present our contribution on policy seeking. Since we have introduced two metrics to evaluate multi-agent policies, we propose training algorithms to seek the best policies according to both metrics, respectively. 

\begin{figure}
	\centering
	\subfigure{
		\includegraphics[width=85mm,height=35mm]{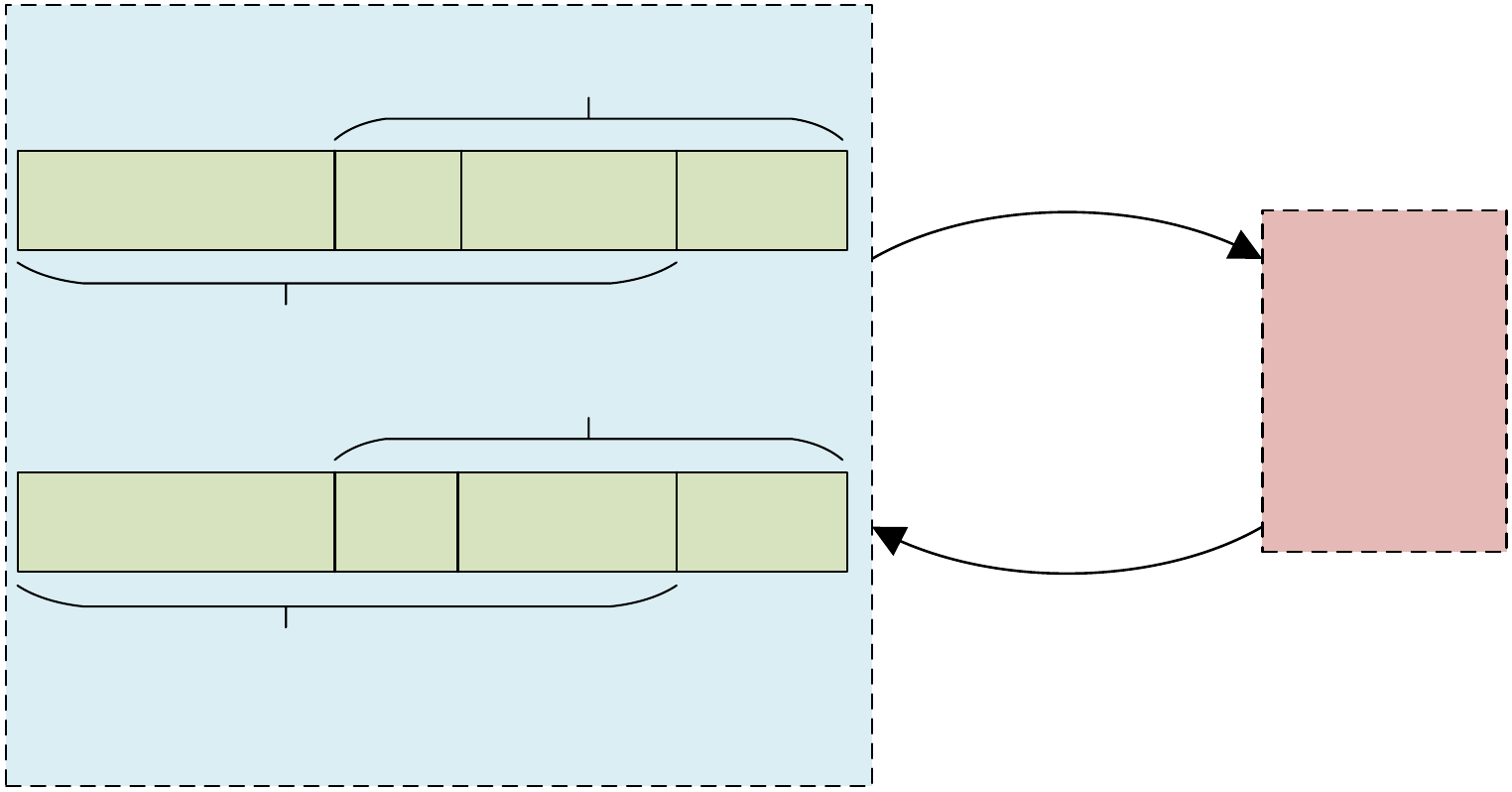}
		\put(-235,84){\footnotesize $h_{\tau}$}
		\put(-200,53){\footnotesize $h^L_{\tau}$}
		\put(-151,89.5){\footnotesize $h_{\tau}^R$}
		\put(-235,45){\footnotesize $p(h_{\tau})$}
		\put(-230,72){\footnotesize $s_{\tau-m+1}$}
		\put(-125,72){\footnotesize  $s_{\tau}$}
		\put(-237,32){\footnotesize  $J(s_{\tau-m+1})$}
		\put(-184,72){\footnotesize  $\cdots$}
		\put(-184,31){\footnotesize  $\cdots$}
		\put(-159,72){\footnotesize  $s_{\tau-1}$}
		\put(-159,50){\footnotesize  $p^R(h_{\tau})$}
		\put(-209,12.5){\footnotesize  $p^L(h_{\tau})$}
		\put(-166,31){\footnotesize  $J(s_{\tau-1})$}
		\put(-130,31){\footnotesize  $J(s_{\tau})$}
		\put(-205,3){\footnotesize{memory subsystem}}
		\put(-100,76){\footnotesize{strategy selection}}
		\put(-90,64){\footnotesize  $s_{\tau},p(h_{\tau})$}
		\put(-96,35){\footnotesize  $s_{\tau+1},J(s_{\tau+1})$}
		\put(-94,20.5){\footnotesize{history update}}
		\put(-34,52){\footnotesize{learning}}
		\put(-37,45){\footnotesize{subsystem}}
	}
	\caption{Sketch of the training system.}
	\label{training_system_figure}
\end{figure}

\subsection{Training System}
We first briefly introduce the framework of our training system which consists of a learning subsystem and a memory subsystem, shown in Fig. \ref{training_system_figure}. The learning subsystem is a training environment where multiple agents are learning and empirical games are simulated \cite{WW-RD-GT-JK:02}. In an empirical game, expected payoffs for each complex strategy profile are estimated and recorded in an empirical payoff table. The memory subsystem is a finite memory unit for storing the learned joint strategies.   
	
The meta game $G_2$ is played once in the learning subsystem during each period $\tau\in\{1,2,\dots\}$, where $\tau$ is the \emph{exploration phase} \cite{GA-SY-17}. At the exploration phrase $\tau$, the learning subsystem fetches the latest joint strategy $s_{\tau}$ and the payoff matrix $p(h_{\tau})$ (defined later) from the memory subsystem and combines them with reinforcement learning and empirical games to generate a new joint strategy $s_{\tau+1}$ and the related payoff $J(s_{\tau+1})$. Then, the memory subsystem receives $s_{\tau+1}$ and $J(s_{\tau+1})$, and uses a pruned history update rule to decide whether to store them. The process iterates afterwards.   

Our goal is to design algorithms such that after a long training period, the memory subsystem stores the joint strategies that have the maximum underlying metrics with probability one, regardless of the initial joint strategies.  

\subsection{Perturbed Strict Best Response Dynamics}
The proposed perturbed SBRD is described in \tabref{PSBRD}. Suppose that the memory subsystem possesses a finite memory of length $m$, and can recall the history of the previous $m$ joint strategies and the associated payoffs. Let $h_{\tau}$ and $p(h_{\tau})$ be the histories of joint strategies (called \emph{history state}) and payoffs up to exploration phrase $\tau$ in the memory subsystem, where $h_{\tau}=(s_{\tau-m+1},\dots,s_{\tau})$ and $p(h_{\tau})=(J(s_{\tau-m+1}),\dots,J(s_{\tau}))\in\mathbb{R}^{n\times m}$. Let $\mathcal{H}=\mathcal{S}^m$ denote the space of history states, and for any $h\in\mathcal{H}$, let $h^L$ and $h^{R}$ be the leftmost and rightmost $m-1$ joint strategies of $h$, respectively. Similar notations $p^L(h)$ and $p^R(h)$ are used for the payoffs. For example, for $h_{\tau}$ and $p(h_{\tau})$, we have that $h_{\tau}^L=(s_{\tau-m+1},\dots,s_{\tau-1})$, $h_{\tau}^R=(s_{\tau-m+2},\dots,s_{\tau}),p^L(h_{\tau})=(J(s_{\tau-m+1}),\dots,J(s_{\tau-1}))$ and $p^R(h_{\tau})=(J(s_{\tau-m+2}),\dots,J(s_{\tau}))$, as Fig. \ref{training_system_figure} illustrates. We here emphasize that $h_{\tau}$ can also be treated as a sequence of joint strategies $(s_{\tau-m+1},s_{\tau-m+2},\dots,s_{\tau})$ from $s_{\tau-m+1}$ to $s_{\tau}$.

The training system is initialized by simulating $G_2$ with a random selected joint strategy $s_1$ and then storing $s_1$ and the associated payoff vector $J(s_1)$ into the rightmost of memory subsystem. In the unperturbed process ($\epsilon=0$), only steps {\ref{item:SS}}, {\ref{item:RL}} and {\ref{item:HU}} of \tabref{PSBRD} are executed with $\bar{\epsilon}=0$. In these three steps, we update the history state under two conditions: if $s_{\tau}$ in current history state $h_{\tau}$ is a PNE, then we store this PNE $s_{\tau}$  regardless of the new joint strategy $s_{\tau+1}$; if $s_{\tau}$ is not a PNE and $(s_{\tau},s_{\tau+1})$ forms an SBRP, then we store $s_{\tau+1}$. When we store $s_{\tau+1}$ into the memory subsystem, the history state moves from $h_{\tau}$ to $h_{\tau+1}$ by removing the leftmost element $s_{\tau-m+1}$ of $h_{\tau}$ and appending $s_{\tau+1}$ as the rightmost element of $h_{\tau+1}$, i.e., $h_{\tau+1}=(h_{\tau}^R,s_{\tau+1})$. Similarly, when we store $s_{\tau}$, then $h_{\tau+1}=(h_{\tau}^R,s_{\tau})$. Otherwise, $h_{\tau+1}=h_{\tau}$. Similar operations are performed to update the history of the associated payoffs $p(h_{\tau+1})$.

For the perturbed process ($\epsilon>0$), we compute an exploration rate $\bar{\epsilon}$ as a function of the current payoff matrix $p(h_{\tau})$ in steps {\ref{item:Eval}} and {\ref{item:Exp}} of \tabref{PSBRD}, where the feasible function $f$ is a mapping from $\mathbb{R}^{n\times m}$ to $\mathbb{R}_{>0}$ whose form will be designed in \defiref{FeasibleFunc}. In this process, the strategy selection is slightly perturbed by $\bar{\epsilon}$ such that, with a small probability $\bar{\epsilon}$ agent $i$ follows a uniform strategy (or, it explores). The step {\ref{item:EG}} of \tabref{PSBRD} is executed when at least one agent explores, in which case we need to run an empirical game \cite{WW-RD-GT-JK:02} to obtain the payoff vector $J(s_{\tau+1})$ corresponding to $s_{\tau+1}$ as the value of payoff function $J^i(s)$ is unknown for all joint strategies $s\in\mathcal{S}$ and all agents $i\in N$ in advance. The history update is also perturbed by $\bar{\epsilon}$ such that with a small probability $\bar{\epsilon}$ the learned strategy $s_{\tau+1}$ and the related payoff $J(s_{\tau+1})$ are directly recorded.

We assume that in step {\ref{item:RL}} of \tabref{PSBRD}, the selected
agent is able to learn a best-response strategy. This can be achieved by
using some single-agent reinforcement learning algorithms (for example,
$Q$-learning \cite{RS-AB:18}), as long as the exploration phrase runs for a
sufficiently long time \cite{GA-SY-17}. Note that in the first $m$ plays of
the game, the memory is not full. We consider a sufficiently long
exploration phrase $\tau$ such that $\tau>m$. Thus, after a long run, i.e.,
$\tau>m$, the memory is always full and the history state satisfies
$h_{\tau}\in\mathcal{H}=\mathcal{S}^m$. Unless otherwise specified, we have
$\tau>m$ by default for all the proofs.

For the analysis below, we assume that the bound for each agent's payoff is known. 

\begin{asp}[Payoff bound]\label{PayoffBound}
	Suppose that the agents' payoff functions are non-negative and bounded by $J_{\rm max}>0$ from the above, i.e., $0\leq J^i(s)\leq J_{\rm max}$ for all strategies $s\in\mathcal{S}$ and all agents $i\in N$.
\end{asp}

\begin{table}
	\begin{center}
		\caption{Perturbed Strict Best Response Dynamics}
		\begin{tabular}{|p{8.1cm}|}
			\hline\vspace{0.01ex}
			\textbf{Initialize}: Take $p(h_1)=0_{n\times m}$. Randomly select a strategy $s_{1}^i$ from $\mathcal{S}^i$ for all $i\in N$. Then simulate $G_2$ with $s_1=(s^1_1,\dots,s^n_1)$ and $x_0=x$ for all $x\in\mathcal{X}$, and obtain the payoff vector $J(s_1)=(J^1(s_1),\dots,J^n(s_1))^{\top}$. Store $s_1$ and $J(s_1)$ into the system memory: $h_1=(h_1^R,s_1),p(h_1)=(p^R(h_1),J(s_1))$.
			
			\textbf{Learning process}: At time-steps $\tau=1,2,3,\dots,$
			\begin{enumerate}[ref=\arabic*)]
				\item
			    \label{item:Eval}	(\textbf{evaluation}) Use a feasible function $f:\mathbb{R}^{n\times m}\to\mathbb{R}_{>0}$ (defined in \defiref{FeasibleFunc}) to evaluate all the joint strategies in $h_{\tau}$ by their payoff matrix $p(h_{\tau})$: $\kappa=f(p(h_{\tau}))$; 
				\item\label{item:Exp}
				(\textbf{exploration}) Compute a payoff-based exploration rate: $\bar{\epsilon}=\epsilon^{\kappa}$;
				\item\label{item:SS} (\textbf{strategy selection}) Select an agent from $N$ randomly, say $j$. Agent $i\in N$ selects a new strategy as follows:
				\[
				s^i_{\tau+1}=
				\begin{cases}
				\textup{a best response}& \textup{with probability } 1-\bar{\epsilon} \textup{ if }i=j,\\
				\textup{rand}(\mathcal{S}^i)& \textup{with probability }\bar{\epsilon}\textup{ if }i=j,\\
				s^i_{\tau} &\textup{with probability }1-\bar{\epsilon} \textup{ if }i\neq j,\\
				\textup{rand}(\mathcal{S}^i) & \textup{with probability }\bar{\epsilon}\textup{ if }i\neq j;
				\end{cases}
				\]
				\item \label{item:RL} (\textbf{reinforcement learning}) If $s^j_{\tau+1}$ needs to be a best response, i.e., $s^j_{\tau+1}\in B^j(s^{-j}_{\tau})$, then
				
				\textbf{for} \emph{many episodes} \textbf{do}
				
				\quad Train $s^j_{\tau+1}$ over $s^{-j}_{\tau}$ with $x_0=x$ for all $x\in\mathcal{X}$;
				
				Obtain $J(s^j_{\tau+1},s^{-j}_{\tau})$. Then, take $J(s_{\tau+1})=J(s^j_{\tau+1},s^{-j}_{\tau})$;
				
				\item\label{item:EG} (\textbf{empirical game}) If there exists an $i\in N$ such that $s^i_{\tau+1}$ is obtained by $\text{rand}(\mathcal{S}^i)$ in step 3) (that is, agent $i$ explores), then simulate the game $G_2$ with the strategy profile $s_{\tau+1}$, and obtain the payoff vector $J(s_{\tau+1})$;

				\item\label{item:HU} (\textbf{history update}) With probability $1-\bar{\epsilon}$, the history follows the update rule: if $s_{\tau}$ is a PNE, then
				\begin{equation*}\begin{aligned}
				&h_{\tau+1}=(h_{\tau}^{R},s_{\tau}),\quad p(h_{\tau+1})=(p^R(h_{\tau}),J(s_{\tau}));
				\end{aligned}
				\end{equation*}
				if $s_{\tau}$ is not a PNE and $J^j(s_{\tau+1})>J^j(s_{\tau})$, then
				\begin{equation*}\begin{aligned}
				&h_{\tau+1}=(h^R_{\tau},s_{\tau+1}),\quad p(h_{\tau+1})=(p^R(h_{\tau}),J(s_{\tau+1}));
				\end{aligned}
				\end{equation*}
				otherwise,
				\begin{equation*}	h_{\tau+1}=h_{\tau},\quad p(h_{\tau+1})=p(h_{\tau}).
				\end{equation*}
				With probability $\bar{\epsilon}$, the history explores with the update rule: 
	\begin{equation}\begin{aligned}\label{historyUR4}
				&h_{\tau+1}=(h^R_{\tau},s_{\tau+1}),\ \ p(h_{\tau+1})=(p^R(h_{\tau}),J(s_{\tau+1})).
				\end{aligned}
				\end{equation}
			\end{enumerate}
			\\
			\hline
		\end{tabular}\label{PSBRD}
	\end{center}
\end{table}

\subsection{Background on Finite Markov Chains}\label{BackgroundFMC}
Before presenting our main results, we provide some preliminaries on finite Markov chains \cite{GCC:19,HPY:93}. In order to maintain consistency with our analysis later, we use the same set of symbols to represent the state space and the states as before in this subsection. Let $P^0$ be the transition matrix of a stationary Markov chain defined on a finite state space $\mathcal{H}$ and $P^{\epsilon}$ be a \emph{regular perturbation} of $P^0$ defined below \cite{HPY:93}.

\begin{defi}[Regular perturbation]\label{RegularPerDefi}
    A family of Markov processes $P^{\epsilon}$ is a regular perturbation of $P^0$ defined over $\mathcal{H}$, if there exists $\epsilon_1>0$ such that the following conditions hold for all $h,h'\in\mathcal{H}$:
    \begin{enumerate}[label=(\roman*)]
	\item\label{item:RegularP1} $P^{\epsilon}$ is aperiodic and irreducible for all $\epsilon\in(0,\epsilon_1]$; 
	\item\label{item:RegularP2} $\lim_{\epsilon\to0}P_{hh'}^{\epsilon}=P_{hh'}^0$;
	\item\label{item:RegularP3} $P_{hh'}^{\epsilon}>0$ for some $\epsilon>0$ implies that there exists an $r(h,h')\ge0$, called the \emph{resistance of the transition} from $h$ to $h'$, such that
	\begin{equation}\label{ResistanceDefi}
	0<\lim_{\epsilon\to0^+}\frac{P_{hh'}^{\epsilon}}{\epsilon^{r(h,h')}}<\infty.
	\end{equation} 
\end{enumerate}
\end{defi}

Note that the resistance $r(h,h')=0$ if and only if $P_{hh'}^0>0$. Take $r(h,h')=\infty$ if $P^\epsilon_{hh'}=0$. Let $\mathcal{G}_{\textup{H}}=(\mathcal{H},\mathcal{E}_{\textup{H}})$ be the digraph induced by the transition matrix $P^{\epsilon}$. For every pair of states $h,h'\in\mathcal{H}$, the directed edge $e_{hh'}$ exists in $\mathcal{E}_{\textup{H}}$ if $P_{hh'}^\epsilon>0$. Since $P^{\epsilon}$ is irreducible for $\epsilon>0$, it has a unique stationary distribution $\pi^{\epsilon}$ satisfying $\pi^{\epsilon\top} P^{\epsilon}=\pi^{\epsilon\top}$, $\pi^{\epsilon}_h>0$ and $\pi^{\epsilon\top}\mathbb{1}=1$, and the digraph $\mathcal{G}_{\textup{H}}$ is strongly connected. We consider the following concept of stability introduced in \cite{DF-HPY:90}.

\begin{defi}[Stochastic ability]\label{StableAbilityDefi}
	A state $h\in \mathcal{H}$ is stochastically stable relative to the process $P^{\epsilon}$ if $\lim_{\epsilon\to0}\pi_{h}^{\epsilon}>0$.
\end{defi}

Note that $\pi_h^{\epsilon}$ is the relative frequency with which state $h$ will be observed when the process runs for a very long time. Thus, over the long run, states that are not stochastically stable will be observed less frequently compared to states that are, provided that the perturbation $\epsilon$ is sufficiently small. Moreover, we emphasize here that  $\lim_{\epsilon\to0}\pi_{h}^{\epsilon}$ exists for all $h\in\mathcal{H}$, which will be shown in \lemaref{LimitDis}. 
It turns out that the $\mathcal{W}$-graph defined below is useful in analyzing stochastic stability of states.

\begin{figure}
	\centering
	\subfigure{
		\includegraphics[width=76mm,height=38mm]{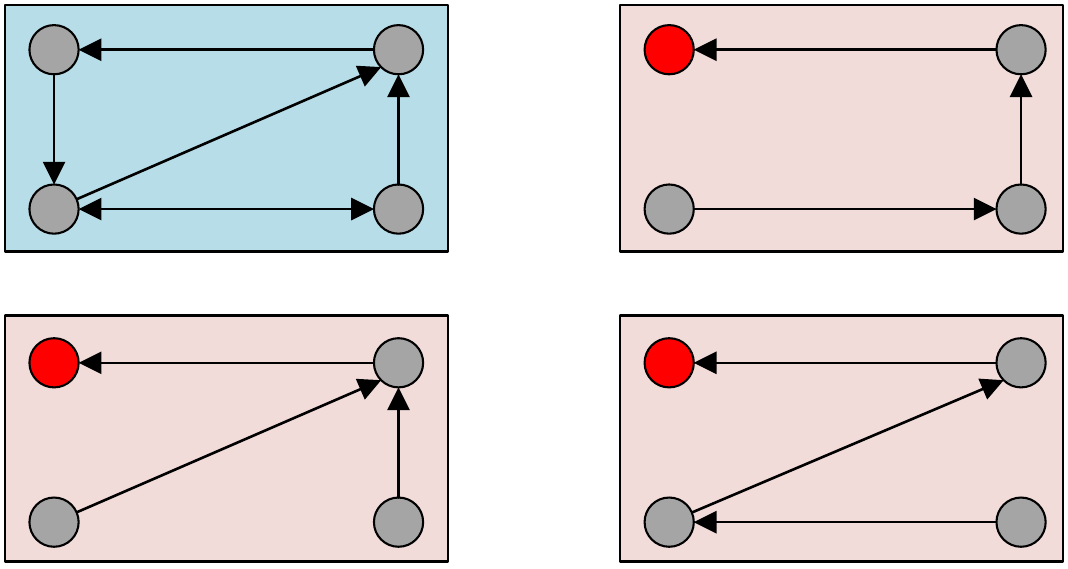}
		\put(-83,28){\scriptsize $h'$}
		\put(-202,89){\scriptsize $h'$}
		\put(-83,87){\scriptsize $h'$}
		\put(-208,28){\scriptsize $h'$}
		\put(-176,53){\scriptsize $(a)$}
		\put(-50,53){\scriptsize $(b)$}
		\put(-176,-7){\scriptsize $(c)$}
		\put(-50,-7){\scriptsize $(d)$}
	}
	\caption{An example of $h'$-trees in case $\mathcal{H}$ containing four states, where the graph $\mathcal{G}_{\textup{H}}$ in $(a)$ consists of three $h'$-trees for some $h'\in\mathcal{H}$ (in red) shown in $(b)$, $(c)$ and $(d)$.}
	\label{H_tree_figure}
\end{figure}

\begin{defi}[$\mathcal{W}$-graph] For any nonempty proper subset $\mathcal{W}\subset\mathcal{H}$, a digraph $g=(\mathcal{H},\mathcal{E}_g)$ is a $\mathcal{W}$-graph of $\mathcal{G}_{\textup{H}}=(\mathcal{H},\mathcal{E}_{\textup{H}})$ if it satisfies the following conditions:
\begin{enumerate}[label=(\roman*)]
        \item the edge set satisfies $\mathcal{E}_g\subset\mathcal{E}_{\textup{H}}\setminus\{e_{hh'}\,|\,h\in\mathcal{W},h'\in\mathcal{H},h\neq h'\}$;
		\item every node $h\in \mathcal{H}\setminus\mathcal{W}$ is the start node of exactly one edge in $\mathcal{E}_g$;
		\item there are no cycles in the graph $g$; or equivalently, for any node $h\in \mathcal{H}\setminus\mathcal{W}$, there exists a sequence of directed edges from $h$ to some state $h'\in\mathcal{W}$.
\end{enumerate}
\end{defi}

Note that if $\mathcal{W}$ is a singleton consisting of $h'$, then a $\mathcal{W}$-graph $g$ is actually a spanning tree of $\mathcal{G}_{\textup{H}}$ such that from every node $h\neq h'$ there is a unique path from $h$ to $h'$. We call such a $\mathcal{W}$-graph an $h'$-tree, as in \cite{HPY:93}. We will denote by $\mathbb{G}(\mathcal{W})$ the set of $\mathcal{W}$-graphs. An example of $h'$-trees for some $h'\in\mathcal{H}$ when $\mathcal{H}$ contains four states, is shown in Fig.~\ref{H_tree_figure}.

We now state Young's results \cite{HPY:93} for perturbed Markov
processes. Define the \emph{stochastic potential} of state $h$ by
\begin{equation}\label{StochasticPotentialeq}
\gamma(h)=\min_{g\in\mathbb{G}(h)}\sum_{e_{h'h''}\in \mathcal{E}_g}r(h',h''),
\end{equation} 
where $r(h',h'')$ is the resistance from $h'$ to $h''$ defined in \defiref{RegularPerDefi}.

\begin{lema}[Perturbation and stochastic stability {\cite[Lemma 1]{HPY:93}}]\label{LimitDis}
  Let $P^\epsilon$ be a regular perturbation of $P^0$ and $\pi^\epsilon$ be
  its stationary distribution. Then
  \begin{enumerate}
  \item $\lim_{\epsilon\to0}\pi^\epsilon=\pi^0$ where $\pi^0$ is a
    stationary distribution of $P^0$; and
  \item $h$ is stochastically stable ($\pi_h^0>0$) if and only if
    $\gamma(h)\leq\gamma(h')$ for all $h'$ in $\mathcal{H}$.
  \end{enumerate}
\end{lema}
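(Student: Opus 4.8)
The plan is to prove both parts at once by writing down an explicit formula for the stationary distribution $\pi^\epsilon$ and then reading off its leading-order behavior as $\epsilon\to0$. Since $P^\epsilon$ is irreducible for $\epsilon\in(0,\epsilon_1]$ by condition~\ref{item:RegularP1} of \defiref{RegularPerDefi}, I would first invoke the classical Markov chain tree theorem, which expresses each component of the unique stationary distribution as a normalized sum over spanning $h$-trees:
\begin{equation*}
\pi_h^\epsilon=\frac{\sum_{g\in\mathbb{G}(h)}\prod_{e_{h'h''}\in\mathcal{E}_g}P^\epsilon_{h'h''}}{\sum_{\tilde h\in\mathcal{H}}\sum_{g\in\mathbb{G}(\tilde h)}\prod_{e_{h'h''}\in\mathcal{E}_g}P^\epsilon_{h'h''}}.
\end{equation*}
This reduces the whole problem to understanding how each tree-product scales with $\epsilon$.

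Next I would extract the dominant power of $\epsilon$ from each tree-product. By the resistance property~\ref{item:RegularP3}, every surviving transition factors as $P^\epsilon_{h'h''}=\epsilon^{r(h',h'')}\rho_{h'h''}(\epsilon)$ with $\lim_{\epsilon\to0^+}\rho_{h'h''}(\epsilon)=c_{h'h''}\in(0,\infty)$, while transitions with $P^\epsilon_{h'h''}=0$ never appear in an admissible tree. Hence a tree $g$ contributes $\epsilon^{R(g)}\prod_{e\in\mathcal{E}_g}\rho_e(\epsilon)$ with total resistance $R(g)=\sum_{e_{h'h''}\in\mathcal{E}_g}r(h',h'')$, and the minimal such exponent over $g\in\mathbb{G}(h)$ is precisely the stochastic potential $\gamma(h)$ defined in \eqref{StochasticPotentialeq}. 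Factoring $\epsilon^{\gamma(h)}$ out of the numerator leaves only the trees attaining this minimum, whose coefficients $\prod_e c_e$ are strictly positive, so the numerator equals $\epsilon^{\gamma(h)}(C_h+o(1))$ with $C_h>0$; likewise the denominator equals $\epsilon^{\gamma^*}(C+o(1))$ with $\gamma^*=\min_{\tilde h}\gamma(\tilde h)$ and $C>0$.

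Dividing these asymptotics gives $\pi_h^\epsilon=\epsilon^{\gamma(h)-\gamma^*}(C_h/C+o(1))$, so $\lim_{\epsilon\to0}\pi_h^\epsilon$ exists for every $h$, equalling $C_h/C>0$ when $\gamma(h)=\gamma^*$ and $0$ when $\gamma(h)>\gamma^*$. This immediately yields part~2: $h$ is stochastically stable exactly when $\gamma(h)=\min_{h'}\gamma(h')$, i.e. when $\gamma(h)\le\gamma(h')$ for all $h'\in\mathcal{H}$. For part~1, since each componentwise limit exists I would set $\pi^0=\lim_{\epsilon\to0}\pi^\epsilon$, observe that $\sum_h\pi^0_h=\lim_{\epsilon\to0}\sum_h\pi_h^\epsilon=1$ as a finite sum, and then pass to the limit in the balance equation $\pi^{\epsilon\top}P^\epsilon=\pi^{\epsilon\top}$ using the continuity condition~\ref{item:RegularP2}, concluding that $\pi^{0\top}P^0=\pi^{0\top}$ and hence that $\pi^0$ is a stationary distribution of $P^0$.

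The main obstacle is the non-cancellation at leading order: I must guarantee that the coefficient $C_h$ of $\epsilon^{\gamma(h)}$ in the numerator does not vanish. This is exactly where positivity of the resistance coefficients matters — because every $c_e>0$, the leading coefficient is a sum of strictly positive products over the minimizing trees and so cannot cancel to zero. A secondary subtlety is that the resistances $r(h',h'')$ need not be integers, so rather than manipulating a power series I would phrase everything in terms of the single dominant exponent $\epsilon^{\gamma(h)}$, which keeps the argument valid for arbitrary nonnegative real resistances and ensures the limits are genuine (not merely $\limsup$/$\liminf$) bounds.
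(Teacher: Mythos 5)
Your proposal is correct and takes essentially the same route as the paper's source for this statement: the paper proves \lemaref{LimitDis} only by citation to \cite[Lemma 1]{HPY:93}, and Young's proof is exactly your argument---the Markov chain tree (Freidlin--Wentzell) representation of $\pi^\epsilon_h$, factoring each admissible $h$-tree product as $\epsilon^{R(g)}$ times a positive coefficient, identifying the dominant exponent with the stochastic potential $\gamma(h)$ from \eqref{StochasticPotentialeq}, and comparing against $\gamma^*=\min_{h'}\gamma(h')$ to get both the existence of $\pi^0$ and the characterization of stochastic stability. Your attention to non-integer resistances is a sound and, in this paper's setting, genuinely needed refinement (here $r(h,h')=e(s,s')f(p(h))$ with $f$ real-valued, whereas Young's resistances are mistake counts), but it does not alter the structure of the argument.
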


\subsection{Policy Seeking}
In this subsection, we propose the class of perturbed SBRD algorithms as shown in \tabref{PSBRD} to seek the best policies according to the underlying metrics.

The unperturbed $(\epsilon=0)$ and perturbed $(\epsilon>0)$ SBRDs are two Markov chains over the set of history states $\mathcal{H}$. For consistency, we use the same symbols and concepts introduced in Section \ref{BackgroundFMC}. Let $P^0$ and $P^{\epsilon}$ be the transition matrices of the unperturbed and perturbed SBRDs, respectively. 

We first consider the unperturbed SBRD $P^0$ and discuss its connections
with the sink equilibria of the mega-game $G_2$. For simplicity, we use
$s\in h$ to indicate that a joint strategy $s$ is recorded in a history
state $h$. We let $v$ denote the number of sink equilibria in the meta-game
$G_2$ and $\mathcal{Q}=\{Q_1,Q_2,\dots,Q_{v}\}$ denote the set of sink
equilibria. For each sink equilibrium $Q_i\in\mathcal{Q}$, we define an
induced set $Y_i$, called \emph{recurrent communication class (RCC)}, of
the history states as follows:
\begin{equation}\label{Class2}
Y_i=\{h\in\mathcal{H}\,|\,h\text{ is an SBRP, }\forall s\in h, \text{ then }s\in Q_i\}.
\end{equation}
We summarize these induced RCCs as $\mathcal{Y}=\{Y_1,Y_2,\dots,Y_v\}$. Next, we discuss the dynamic stability of the unperturbed SBRD $P^0$.

\begin{pro}[Unperturbed process]\label{UnperturbedProcess}
	For the unperturbed SBRD $P^0$, each RCC $Y_i$ is an absorbing history state set, i.e., once the history state enters $Y_i$, it will not leave $Y_i$. As a result, $P^0$ is an absorbing chain with $v$ absorbing history state sets.  
\end{pro}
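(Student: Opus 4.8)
The plan is to establish two facts: that each RCC $Y_i$ is closed under the unperturbed dynamics $P^0$ (the ``absorbing set'' claim), and that every history state outside $\bigcup_i Y_i$ is transient, so that the $v$ disjoint sets $Y_1,\dots,Y_v$ are precisely the recurrent classes of $P^0$.

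For the closedness I would fix $h_\tau=(s_{\tau-m+1},\dots,s_\tau)\in Y_i$ and trace the unperturbed history-update rule of \tabref{PSBRD} (steps \ref{item:SS}, \ref{item:RL}, \ref{item:HU} with $\bar\epsilon=0$), checking in each of its three branches that $h_{\tau+1}\in Y_i$. By the definition of $Y_i$ in \eqref{Class2}, $h_\tau$ is an SBRP all of whose entries lie in $Q_i$; in particular $s_\tau\in Q_i$. The key observation is that $Q_i$, being a sink strongly connected component (\defiref{SinkDefi}), has no outgoing edges in $\mathcal{G}_{\textup{S}}$. Hence in the ``improvement'' branch, where the stored strategy $s_{\tau+1}=(s_\tau^{-j},s_{\tau+1}^j)$ satisfies $s_{\tau+1}^j\in B^j(s_\tau^{-j})$ and $J^j(s_{\tau+1})>J^j(s_\tau)$ and therefore $s_\tau\to s_{\tau+1}\in\mathcal{E}_{\textup{S}}$, the edge must stay inside $Q_i$, so $s_{\tau+1}\in Q_i$. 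In the PNE branch the stored strategy is $s_\tau\in Q_i$ itself, and in the remaining branch $h_{\tau+1}=h_\tau$. In every branch $h_{\tau+1}$ is obtained by dropping $s_{\tau-m+1}$ and appending a strategy in $Q_i$ along either a strict best response edge inside $Q_i$ or a PNE self-recording, so $h_{\tau+1}$ is again an SBRP contained in $Q_i$, i.e.\ $h_{\tau+1}\in Y_i$. I would also record the small structural fact that a singleton sink equilibrium is exactly a PNE (a PNE admits no improving deviation, hence no outgoing edge; conversely a node with no outgoing edge is a singleton SSCC), which reconciles the PNE branch with the singleton case $\mathcal{C}(Q_i)=\{s\}$. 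Disjointness of the $Y_i$ is then immediate from the disjointness of the SSCCs $Q_i$: a nonempty history cannot have all its entries lie simultaneously in two distinct $Q_i$.

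For the ``absorbing chain'' conclusion I would argue that from any initial history the process reaches some $Y_i$ with probability one. The strategies actually recorded by step \ref{item:HU} form an SBRD walk on $\mathcal{G}_{\textup{S}}$ (each recorded transition is a strict best response edge, and PNEs are re-recorded), and by the convergence result of \cite{MG-VM-AV:05} such a walk enters a sink equilibrium $Q_i$ almost surely. To lift this to the history level, I would check that recordings continue to occur until the memory is filled by $Q_i$: from any non-PNE node of $Q_i$ there is at least one strict best response edge inside $Q_i$ (otherwise the node would be a singleton sink), which is selected with positive probability each phase, whereas a PNE is re-recorded at every phase. Thus after finitely many phases all $m$ stored strategies lie in $Q_i$ and form an SBRP, so $h_\tau\in Y_i$. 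Combined with the closedness of each $Y_i$, this shows $\mathcal{H}\setminus\bigcup_i Y_i$ is transient and $P^0$ is an absorbing chain with the $v$ absorbing sets $Y_1,\dots,Y_v$.

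The step I expect to be the main obstacle is this last one, namely linking the strategy-level convergence guarantee to the history-level chain, because the history can stall (the no-improvement branch records nothing); one must therefore argue that improving moves are nonetheless taken often enough, with probability one, to flush the memory into a single sink equilibrium. By contrast the closedness argument is a relatively clean case analysis built on the single fact that a sink SCC has no outgoing edges.
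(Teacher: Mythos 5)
Your proof is correct and follows essentially the same route as the paper's: the paper likewise traces the unperturbed history-update rule, splitting into the PNE case (the memory fills with the re-recorded PNE) and the non-PNE case (recordings occur only along strict best response edges, so the memory fills with an SBRP inside a sink equilibrium, which it never leaves). You are in fact more careful than the paper at the two points it glosses over --- the closedness of $Y_i$ via the no-outgoing-edge property of a sink SCC, and the almost-sure absorption despite possible stalling in the no-improvement branch --- so your write-up is a rigorous completion of the published argument rather than a departure from it.
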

\begin{proof}
  When $\epsilon=0$, only steps \ref{item:SS}, \ref{item:RL}, and
  \ref{item:HU} of \tabref{PSBRD} are executed. In step \ref{item:HU}, if
  the latest joint strategy (i.e., $s_{\tau}$) in $h_{\tau}$ is a PNE, then
  $s_{\tau}$ will be appended. Thus after running for a finite number of
  times, $h_{\tau}$ only consists of this PNE, implying that $h_{\tau}\in
  Y_i$ for some $Y_i\in\mathcal{Y}$. If $s_{\tau}$ is not a PNE, then we
  record $s_{\tau+1}$ only when it is an SBRP from $s_{\tau}$. Thus, it is
  possible that after a finite run, $h_{\tau}$ will be filled with an SBRP
  with all involved joint strategies belonging to a non-singleton sink
  equilibrium, if no PNE is visited in this run. Thus, in this case,
  $h_{\tau}\in Y_i$ for some $Y_i\in\mathcal{Y}$ and it will never leave
  $Y_i$, where $Y_i$ is defined in \eqref{Class2}. From the above, each RCC
  $Y_i$ is an absorbing history state set. This concludes the proof.
\end{proof}

\proref{UnperturbedProcess} shows that when $P^0$ runs for a long period of time, the joint strategies in the memory subsystem all come from one sink equilibrium. However, it is hard to predict which sink equilibrium will be reached based on the initial joint strategy $s_1$. Thus, similar to the method in \cite{GCC:19}, we want to design a class of perturbations to $P^0$ such that after a long run, the perturbed SBRD $P^{\epsilon}$ guarantees that the joint strategies with the maximum underlying metrics are observed in the memory subsystem with high probability, regardless of the initial joint strategies. In this paper, we use stochastic stability as a solution concept, as in \cite{GCC:19,JRM:17,HB-JM-JS:19,ACC-DSL-AR-NRJ:13}. 

 First, motivated by the concept of mistake in \cite{HPY:93}, we introduce the concept of \emph{exploration number} for the transition between two joint strategies, which plays a key role in the following analysis.

\begin{defi}[Exploration number]\label{EXdefi}
  For any two joint strategies $s$ and $s'$, the exploration number
  $e(s,s')\in\mathbb{N}$ from $s$ to $s'$ is the minimum number of agents
  required to explore in order to achieve the transition under the SBRD
  $P^{\epsilon}$ from $s$ to $s'$, plus one if the history update has to
  explore to attach $s'$ through \eqref{historyUR4}.
\end{defi}

For example, consider a 3-agent case and take two joint strategies $s=(s^1,s^2,s^3)$ and $s'=(s'^1,s'^2,s'^3)$. Suppose that $s$ is a PNE, $s^1=s'^1,s^2\neq s'^2,s^3\neq s'^3$, and $s'^2,s'^3$ are not best responses with respect to $s$, i.e., $s'^2\notin B^2(s^{-2})$ and $s'^3\notin B^3(s^{-3})$. Note that both agent $2$ and agent $3$ have to explore to achieve the transition from $s$ to $s'$. Additionally, the history update has to explore to attach $s'$ through \eqref{historyUR4}, as $s$ is a PNE. Thus, the exploration number from $s$ to $s'$ is given by $e(s,s')=2+1=3$.    

Next, we show that $P^{\epsilon}$ is a regular perturbation of $P^0$, and give a bound for the resistance between two history states.

\begin{lema}[Perturbed process]\label{RegularPer}
	The perturbed SBRD $P^{\epsilon}$ is a regular perturbation of $P^0$ over the set of history states $\mathcal{H}$, and  the resistance $r(h,h')$ of moving from $h$ to $h'$ satisfies:
	\begin{enumerate}[label=(\roman*)]
		\item\label{item:PerturbedP1} if $P^{\epsilon}_{hh'}>0$ and $P^0_{hh'}>0$, then $r(h,h')=0$;
		\item\label{item:PerturbedP2} if $P^{\epsilon}_{hh'}>0$ and $P^0_{hh'}=0$, then $f(p(h))\leq r(h,h')\leq (n+1)f(p(h))$;
		\item\label{item:PerturbedP3} if $P^{\epsilon}_{hh'}=0$, then $r(h,h')=\infty$.
	\end{enumerate}
\end{lema}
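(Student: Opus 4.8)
The plan is to verify the three defining properties of a regular perturbation in \defiref{RegularPerDefi} and, along the way, to read off the resistance $r(h,h')$ from the leading-order behaviour of $P^{\epsilon}_{hh'}$ as $\epsilon\to0^{+}$. The central observation is that, for a fixed source state $h$, every randomized choice in \tabref{PSBRD} is either $\epsilon$-independent (the uniform agent selection in step \ref{item:SS}, the uniform draw $\textup{rand}(\mathcal{S}^i)$, and the uniform tie-breaking among best responses) or a Bernoulli factor, namely $\bar{\epsilon}=\epsilon^{\kappa}$ or $1-\bar{\epsilon}$ with $\kappa=f(p(h))>0$ (each exploring agent in step \ref{item:SS} and the exploring branch of step \ref{item:HU}). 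Consequently $P^{\epsilon}_{hh'}$ is a polynomial in $\bar{\epsilon}$ with nonnegative coefficients, each monomial $\bar{\epsilon}^{k}(1-\bar{\epsilon})^{\ell}$ recording one way of realizing $h\to h'$ using exactly $k$ explorations. Since $\kappa>0$, we have $\bar{\epsilon}\to0$ as $\epsilon\to0$, so the dominant monomial is the one with the smallest exponent $k$, which by \defiref{EXdefi} is exactly the exploration number $e(s,s')$ of the corresponding move of last joint strategies, where $s$ and $s'$ denote the last entries of $h$ and $h'$.

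First I would dispose of conditions \ref{item:RegularP1} and \ref{item:RegularP2}. For irreducibility, note that because every agent explores with positive probability $\bar{\epsilon}$ and the history explores with positive probability $\bar{\epsilon}$, from any $h$ one can, over $m$ consecutive phases, force an arbitrary sequence of joint strategies into the memory and thereby reach any $h'\in\mathcal{H}$; hence $P^{\epsilon}$ is irreducible for every $\epsilon\in(0,\epsilon_1]$ with a suitable $\epsilon_1$. Aperiodicity then holds because the irreducible chain admits a positive-probability self-loop, for instance at a constant history $h=(s,\dots,s)$, where with positive probability the selected agent reproduces $s$ (by re-appending a PNE, by staying at a best response, or by exploring to $s$) so that step \ref{item:HU} leaves $h$ unchanged. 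Condition \ref{item:RegularP2} is immediate: as $\epsilon\to0$ we have $\bar{\epsilon}\to0$, and setting $\bar{\epsilon}=0$ in \tabref{PSBRD} recovers exactly steps \ref{item:SS}, \ref{item:RL} and \ref{item:HU} of the unperturbed dynamics, so $P^{\epsilon}_{hh'}\to P^{0}_{hh'}$.

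The resistance claims then follow by inspecting the leading monomial in each case. In case \ref{item:PerturbedP1} the transition is already realizable with $\bar{\epsilon}=0$, so the constant term of the polynomial equals $P^{0}_{hh'}>0$ and $\lim_{\epsilon\to0^{+}}P^{\epsilon}_{hh'}/\epsilon^{0}=P^{0}_{hh'}\in(0,\infty)$, giving $r(h,h')=0$. In case \ref{item:PerturbedP3} no sequence of choices realizes the move, the polynomial is identically zero, and we adopt the convention $r(h,h')=\infty$. The substantive case is \ref{item:PerturbedP2}: here the transition is impossible without exploration ($P^{0}_{hh'}=0$) but possible with it, so the smallest exponent appearing in $P^{\epsilon}_{hh'}$ equals $e(s,s')\ge1$, and $P^{\epsilon}_{hh'}=C\,\bar{\epsilon}^{\,e(s,s')}+o(\bar{\epsilon}^{\,e(s,s')})$ for some $C>0$ collecting the $\epsilon$-independent factors of the minimal realizations. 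Since $\bar{\epsilon}^{\,e(s,s')}=\epsilon^{\kappa\,e(s,s')}$, we obtain $r(h,h')=\kappa\,e(s,s')=f(p(h))\,e(s,s')$ with $\lim_{\epsilon\to0^{+}}P^{\epsilon}_{hh'}/\epsilon^{r(h,h')}=C\in(0,\infty)$. Finally, because at most $n$ agents can explore in step \ref{item:SS} and at most one extra exploration is charged in step \ref{item:HU}, while at least one is forced by $P^{0}_{hh'}=0$, we have $1\le e(s,s')\le n+1$, which yields $f(p(h))\le r(h,h')\le(n+1)f(p(h))$.

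The main obstacle is the precise identification, in case \ref{item:PerturbedP2}, of the minimal exponent of $\bar{\epsilon}$ with the exploration number of \defiref{EXdefi}. This needs a careful enumeration of the admissible realizations of $h\to h'$ together with an argument that the cheapest one uses exactly $e(s,s')$ explorations: on one hand no realization can use fewer, else the unperturbed move would already attach $s'$; on the other hand the realization exploring with the $e(s,s')$ minimally needed agents and, if necessary, through the exploring branch \eqref{historyUR4}, does attach $s'$ and contributes a strictly positive $\epsilon$-independent coefficient. Particular care is required for the ``$+1$'' in \defiref{EXdefi}: one must verify that the history update is forced into its exploring branch precisely when the non-exploring branch of step \ref{item:HU} would reject $s'$ (because $s$ is not a PNE and $J^{j}(s')\le J^{j}(s)$ for the selected agent $j$), so that the exploration bookkeeping of the leading monomial matches \defiref{EXdefi} term by term, including the degenerate self-loop $h'=h$ with $P^{0}_{hh}=0$.
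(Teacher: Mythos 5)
Your proposal is correct and follows essentially the same route as the paper: verify the three conditions of \defiref{RegularPerDefi} (irreducibility via strategy and history exploration, aperiodicity via a positive-probability self-loop, convergence of $P^{\epsilon}_{hh'}$ to $P^{0}_{hh'}$), then obtain $r(h,h')=e(s,s')f(p(h))$ and the bounds from $1\le e(s,s')\le n+1$. The only difference is that where the paper imports the identity $r(h,h')=e(s,s')f(p(h))$ by citing the computation in \cite[Section 5]{HPY:93} and \cite[Section 5.2]{ACC-DSL-AR-NRJ:13}, you derive it directly from the leading monomial of $P^{\epsilon}_{hh'}$ as a polynomial in $\bar{\epsilon}=\epsilon^{f(p(h))}$, which makes your write-up self-contained and, if anything, more careful about the ``$+1$'' bookkeeping in \defiref{EXdefi} than the paper itself.
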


\begin{proof}	
For the perturbed SBRD $P^{\epsilon}$ in \tabref{PSBRD}, the strategy selection rule implies a positive transition probability between any two joint strategies and the history update indicates that the probability of adjoining a new joint strategy is also positive. Thus, it is possible to get to any history state from any history state in a finite number of transitions, which implies that $P^\epsilon$ is irreducible. The history update also guarantees that there always exists a history state $h\in\mathcal{H}$ such that $P^{\epsilon}_{hh}>0$, no matter whether a PNE exists in the meta-game $G_2$. Thus, $P^{\epsilon}$ is aperiodic. Moreover, $\lim_{\epsilon\to 0}P^{\epsilon}_{hh'}=P^0_{hh'}$ is straightforward. We check condition \ref{item:RegularP3} in \defiref{RegularPerDefi} below.

Regarding~\ref{item:PerturbedP1}: 
If $P_{hh'}^{\epsilon}>0$, we have $h^R=h'^L$. Furthermore, if $P^0_{hh'}>0$, then $r(h,h')=0$, i.e., there is no resistance, such that \eqref{ResistanceDefi} holds. Thus, \ref{item:PerturbedP1} is obtained.

Regarding~\ref{item:PerturbedP2}, we denote $h=(h^L,s)=(\bar{s},h^R)$ and $h'=(h'^L,s')$. Then, $P_{hh'}^{\epsilon}>0$ means $h^R=h'^L$, and $P^0_{hh'}=0$ means that from $s$ to $s'$ there is at least one agent who explores, that is, $e(s,s')\ge1$. Notice that \defiref{EXdefi} guarantees that $e(s,s')\leq n+1$. By following the computation of $P_{hh'}^{\epsilon}$ in \cite[Section 5]{HPY:93} or \cite[Section 5.2]{ACC-DSL-AR-NRJ:13}, the resistance from $h$ to $h'$ is given by $r(h,h')=e(s,s')f(p(h))$, which guarantees that \eqref{ResistanceDefi} holds. Thus, we have $f(p(h))\leq r(h,h')\leq (n+1)f(p(h))$.

Regarding \ref{item:PerturbedP3}, since $P^{\epsilon}_{hh'}=0$, then
$r(h,h')=\infty$ such that \eqref{ResistanceDefi} holds. From the above, we
conclude that $P^{\epsilon}$ is a regular perturbation of $P^0$. This
completes the proof.
\end{proof}

Since $P^{\epsilon}$ is irreducible, i.e., $\mathcal{G}_{\textup{H}}$ is strongly connected, we can define the resistance of a path in $\mathcal{G}_{\textup{H}}$ from one node to another as the sum of the resistances of the edges along the path. According to \proref{UnperturbedProcess} and \lemaref{RegularPer}, the RCCs in $\mathcal{Y}$ are disjoint, and under the perturbed process $P^{\epsilon}$ they satisfy the following three properties as in \cite{HPY:93}: $(i)$ from every history state there is a path of zero resistance to at least one of the RCCs $Y_i$; $(ii)$ within each RCC $Y_i$ there is a path of zero resistance from every history state to every other; $(iii)$ every edge exiting from $Y_i$ has positive resistance.

We define the performance of each history state $h\in \mathcal{H}$ as the average performance of the joint strategies in $h$:
\begin{equation}\label{HistoryWelfare}
W(h)=\frac{1}{m}\sum_{s\in h}W(s),
\end{equation}
where $W(s)$ is defined in \eqref{JointMeasure} and characterizes the overall performance of a strategy. Furthermore, the performance of each RCC $Y_i$ is defined as the worst performance of the history states in $Y_i$:
\begin{equation}\label{RecurrentMeasure}
W(Y_i)=\min_{h\in Y_i}W(h).
\end{equation}

Next we establish the connections between metrics of the strategies proposed in Section \ref{MEvaluationSec} and the performances of the RCCs. Let $L$ be the maximum number of joint strategies in a directed cycle $\mathcal{L}$ for all sink equilibria in $\mathcal{Q}$, i.e.,
\begin{equation}\label{MaxLengthCycle}
L=\max_{Q\in\mathcal{Q}}\max_{\mathcal{L}\in\mathcal{C}(Q)}|\mathcal{L}|.
\end{equation}

\begin{figure}
	\centering
	\subfigure{
		\includegraphics[width=78mm,height=24mm]{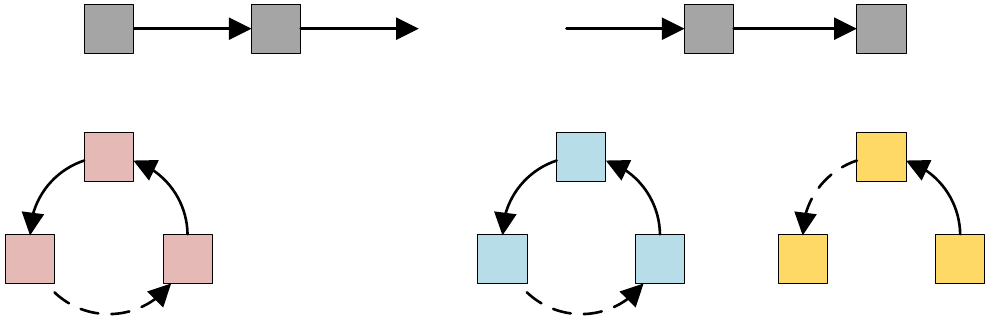}
		\put(-26,16){\scriptsize $\bar{h}$}
		\put(-96,16){\scriptsize $\mathcal{L}_K$}
		\put(-201,16){\scriptsize $\mathcal{L}_1$}
		\put(-110,42){\rotatebox{90}{\large $\Leftarrow$}}
		\put(-206,-6){\scriptsize $z_1(h)$}
		\put(-102,-6){\scriptsize $z_K(h)$}
		\put(-153,16){\Large $\cdots$}
		\put(-120,58){\Large $\cdots$}
		\put(-201,51){\scriptsize $s_1$}
		\put(-163,51){\scriptsize $s_2$}
		\put(-70,51){\scriptsize $s_{m-1}$}
		\put(-29,51){\scriptsize $s_m$}
		\put(-218,60){\scriptsize $h$}
	}
	\caption{Each history state $h$ in an RCC $Y_i$ can be decomposed into $z_1(h)\in\mathbb{N}$ directed cycles of type $\mathcal{L}_1$, $\dots$, $z_K(h)\in\mathbb{N}$ directed cycles of type $\mathcal{L}_K$ and a remainder $\bar{h}$, where $\bar{h}$ is an SBRP without containing any cycle (if nonempty) and $\mathcal{C}(Q_i)=\{\mathcal{L}_1,\dots,\mathcal{L}_K\}$ is the set of directed cycles in $Q_i$.}
	\label{Connections_figure}
\end{figure}

\begin{lema}[Connections between performance and metrics]\label{ReCClass} For each RCC $Y_i\in\mathcal{Y}$, the following conditions hold:
	\begin{enumerate}[label=(\roman*)]
		\item\label{item:CnectPM1} if the cycle-based metric is considered, then for any $\delta>0$, there exists
		\begin{equation}\label{mboundequ}
		\bar{m}_{\delta}=\frac{LJ_{\rm max}}{\delta},
		\end{equation}
		such that if $m\ge \bar{m}_{\delta}$, then
		\begin{equation}\label{PerformaceCycle1}
		|W(Y_i)-M_c(Q_i)|=|W(Y_i)-M_c(s)|\leq \delta,
		\end{equation}
		for all joint strategies $s\in Q_i$;
		\item\label{item:CnectPM2} if the memory-based metric is considered, then $W(Y_i)=M_m(Q_i)=M_m(s)$ for all joint strategies $s\in Q_i$.
	\end{enumerate}
\end{lema}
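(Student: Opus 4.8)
The plan is to treat the two metrics separately, disposing of the memory-based case \ref{item:CnectPM2} first because it is essentially definitional, and then attacking the cycle-based case \ref{item:CnectPM1} through the walk-decomposition sketched in Fig.~\ref{Connections_figure}. For \ref{item:CnectPM2} I would first observe that the RCC $Y_i$ from \eqref{Class2} is \emph{exactly} the set $\mathcal{M}(Q_i)$ of length-$m$ SBRPs contained in $Q_i$: every history state lives in $\mathcal{H}=\mathcal{S}^m$, so it already is an SBRP of length $m$, and the side condition in \eqref{Class2} forces all of its strategies into $Q_i$ (the repeated-PNE convention in the SBRP definition covers singleton $Q_i$). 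For such an $h$ the two averaging formulas coincide, $W(h)=\frac1m\sum_{s\in h}W(s)=W(\mathcal{L})$ with $\mathcal{L}=h$ since $|\mathcal{L}|=m$. Minimizing over $h\in Y_i$ then yields $W(Y_i)=\min_{\mathcal{L}\in\mathcal{M}(Q_i)}W(\mathcal{L})=M_m(Q_i)=M_m(s)$ for every $s\in Q_i$ by \defiref{MBMeasure}, which is the claim.

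For \ref{item:CnectPM1} the core tool is the decomposition of Fig.~\ref{Connections_figure}: viewing any $h\in Y_i$ as a walk on the subgraph induced by $Q_i$, it splits into $z_k(h)$ copies of each directed cycle $\mathcal{L}_k\in\mathcal{C}(Q_i)$ plus an acyclic remainder $\bar h$. Counting vertices and performances gives
\[
m=\sum_{k} z_k(h)|\mathcal{L}_k|+|\bar h|, \qquad \sum_{s\in h}W(s)=\sum_{k} z_k(h)|\mathcal{L}_k|W(\mathcal{L}_k)+\sum_{s\in\bar h}W(s),
\]
where I use $W(\mathcal{L}_k)=\frac{1}{|\mathcal{L}_k|}\sum_{s\in\mathcal{L}_k}W(s)$. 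Under \aspref{PayoffBound} with $\sum_i w^i=1$ one has $0\le W(s)\le J_{\rm max}$, hence $0\le M_c(Q_i)\le J_{\rm max}$.

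From here I would prove two matching one-sided estimates. For the \textbf{lower} bound, valid for \emph{every} $h\in Y_i$, I use $W(\mathcal{L}_k)\ge M_c(Q_i)$ (definition of $M_c$ as the worst cycle) together with $W(s)\ge0$ to get $W(h)\ge \frac1m M_c(Q_i)\bigl(m-|\bar h|\bigr)\ge M_c(Q_i)-\frac{|\bar h|}{m}J_{\rm max}$. For the \textbf{upper} bound I would exhibit one convenient state: letting $\mathcal{L}^\star$ attain $W(\mathcal{L}^\star)=M_c(Q_i)$ and letting $h^\star\in Y_i$ walk repeatedly around $\mathcal{L}^\star$, filling the last $r<|\mathcal{L}^\star|$ slots with a partial traversal, a direct count gives $W(Y_i)\le W(h^\star)\le M_c(Q_i)+\frac{r}{m}J_{\rm max}$. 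Combining the two and recalling $M_c(Q_i)=M_c(s)$ for all $s\in Q_i$, the choice $m\ge \bar m_\delta=LJ_{\rm max}/\delta$ drives $|W(Y_i)-M_c(s)|$ below $\delta$, establishing \eqref{PerformaceCycle1}, provided the two remainder lengths $r$ and $|\bar h|$ are each bounded by $L$.

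The main obstacle is precisely this remainder control. The upper-bound remainder is harmless, since $r<|\mathcal{L}^\star|\le L$ directly from the definition \eqref{MaxLengthCycle} of $L$. The delicate quantity is $|\bar h|$: being an acyclic SBRP in $Q_i$, $\bar h$ is a simple path, so its length is a constant independent of $m$ and the deviation $\frac{|\bar h|}{m}J_{\rm max}$ vanishes as $m$ grows; the subtlety is certifying that this constant is at most $L$, since a priori a simple path in a strongly connected digraph can be longer than its longest cycle. I expect the proof to close this gap by exploiting the structure of $Q_i$ (or, failing that, by replacing $L$ with the size of the largest sink component). The singleton case $Q_i=\{s\}$ is immediate: then $\bar h$ is empty, $\mathcal{C}(Q_i)=\{s\}$, and $M_c(Q_i)=W(s)=W(Y_i)$, so the difference is exactly zero.
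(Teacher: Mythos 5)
Your proof follows the paper's essentially step for step: part \ref{item:CnectPM2} is handled by the same identification $Y_i=\mathcal{M}(Q_i)$ followed by unwinding the definitions, and part \ref{item:CnectPM1} by the same cycle-removal decomposition (the paper's \eqref{HisWExpression}), the same two one-sided estimates with the same error term $LJ_{\rm max}/m$, and the same treatment of the singleton case. Your explicit construction of $h^\star$ (repeated traversals of $\mathcal{L}^\star$ plus a partial lap of length $r<|\mathcal{L}^\star|\le L$) is just the concrete version of the paper's chosen witness $h^*$, and is if anything slightly cleaner, since there the remainder bound holds by construction rather than by appeal to the general claim.

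The one step you flagged as delicate --- certifying $|\bar h|\le L$ for an \emph{arbitrary} $h\in Y_i$ in the lower bound --- is precisely where the paper is thinnest: it disposes of it with the single sentence that ``the remainder $\bar{h}$ is a part of a directed circle if nonempty, implying that $|\bar{h}|\leq L$,'' with no supporting argument. Your skepticism is well placed: in a general strongly connected digraph a simple path need not embed in any simple cycle and can be much longer than the longest cycle (e.g., a chain of small cycles glued at vertices), so the assertion can only be rescued, if at all, by structure special to strict best response graphs --- note for instance that $2$-cycles are impossible there, since an edge and its reverse would both change the same single agent's strategy and both be strict improvements for that agent. The paper supplies no such argument, so your proposal is not missing anything the paper actually proves. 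Your suggested fallback --- replacing $L$ in \eqref{mboundequ} by the length of the longest simple SBRP inside a sink equilibrium, which is what the decomposition genuinely bounds $|\bar h|$ by --- repairs both proofs at the cost of a possibly larger $\bar m_\delta$, and the downstream uses of the lemma in Theorems \ref{ThomMCBM} and \ref{ThomAMW} go through unchanged with that substitution.
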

\begin{proof}

Regarding \ref{item:CnectPM1}, if $Q_i$ is a singleton containing a unique PNE $s$, then \eqref{Class2} indicates that $Y_i$ is also a singleton given by $\{(s,s,\dots,s)\}$. Thus, according to \eqref{HistoryWelfare} and \eqref{RecurrentMeasure}, $W(Y_i)=W(s)$. On the other hand, it follows from \eqref{PathMeasure} and \defiref{CBMeasure} that $M_c(s)=W(s)$, as $\mathcal{C}(Q_i)=\{s\}$. Thus, $W(Y_i)=M_c(s)$, which satisfies \eqref{PerformaceCycle1}.

If $Q_i$ is not a singleton, denote the set of directed cycles in $Q_i$ by $\mathcal{C}(Q_i)=\{\mathcal{L}_1,\mathcal{L}_2,\dots,\mathcal{L}_K\}$ $(K\in\mathbb{N}_{>0})$, where $\mathcal{L}_k$ $(k\in[K])$ is a directed cycle. Next, we show that the performance of each history state $h$ in $Y_i$ can be computed as follows:
\begin{equation}\label{HisWExpression}
W(h)=\frac{\sum_{k=1}^Kz_k(h)|\mathcal{L}_k|W(\mathcal{L}_k)+\sum_{s\in \bar{h}}W(s)}{m},
\end{equation}
where $z_k(h)\in\mathbb{N}$ and $\sum_{k=1}^Kz_k(h)|\mathcal{L}_k|+|\bar{h}|=m$. The path $\bar{h}$ is the remainder of $h$ when all directed circles contained in $h$ are removed. The expansion \eqref{HisWExpression} implies that the history state $h$ in $Y_i$ can be decomposed into many directed cycles (if exist) and a remainder without containing any cycle (if nonempty), as Fig. \ref{Connections_figure} shows. The details are as follows.

First, it follows from \eqref{Class2} that each history state $h\in Y_i$ can be expressed as an SBRP (we take the starting time as $\tau=1$):
\begin{equation*}
h=s_1\to s_2\to\cdots\to s_m.
\end{equation*}
We move the joint strategy starting from $s_1$ along the path $h$. If the same joint strategy is revisited at time $t_2$, i.e., there exists a time instant $t_1\in[1,t_2)$ such that $s_{t_1}=s_{t_2}$. Then the path $s_{t_1}\to\cdots\to s_{t_2}$ is a directed cycle of $Q_i$. We remove this directed cycle and obtain the remainder
\begin{equation}
\bar{h}_1=s_1\to \cdots \to s_{t_1}\to s_{t_2+1}\to\cdots\to s_m.
\end{equation}
Then, we continue to move the joint strategy starting from $s_{t_1}$ along the path $\bar{h}_1$. Similarly, we remove the directed cycle and continue to move along the remainder $\bar{h}_2$. When $s_m$ is reached, this method decomposes $h$ into many directed cycles (if exist) and a remainder $\bar{h}$ without containing any cycle (if nonempty). Thus, \eqref{HisWExpression} is verified. Moreover, the remainder $\bar{h}$ is a part of a directed circle if nonempty, implying that $|\bar{h}|\leq L$. 

Denote by  $\mathcal{L}^*$ the directed cycle in $\mathcal{C}(Q_i)$ with the worst performance, i.e., $\mathcal{L}^*\in\arg\min_{\mathcal{L}\in\mathcal{C}(Q_i)}W(\mathcal{L})$, where $W(\mathcal{L})$ is computed by \eqref{PathMeasure}. Thus, \eqref{CycleMeasureEqu} leads to $M_c(s)=W(\mathcal{L}^*)$ for all joint strategies $s\in Q_i$. By \eqref{JointMeasure}, \eqref{PathMeasure} and \aspref{PayoffBound}, we obtain 
\begin{equation}\label{BoundJointCycle}
0\leq W(s)\leq J_{\rm max},\quad 0\leq W(\mathcal{L}^*)\leq J_{\rm max}.
\end{equation}

For any $\delta>0$, let $\bar{m}_\delta$ be as in \eqref{mboundequ}, and then it follows from \eqref{RecurrentMeasure}, \eqref{MaxLengthCycle}, \eqref{HisWExpression} and \eqref{BoundJointCycle} that for any memory $m\ge \bar{m}_\delta>0$ and $s\in Q_i$, we have
\begin{equation}\begin{aligned}\label{RCCdelta1}
W(Y_i)&=\min_{h\in Y_i}W(h)\\
&=\min_{h\in Y_i}\frac{\sum_{k=1}^Kz_k(h)|\mathcal{L}_k|W(\mathcal{L}_k)+\sum_{s\in \bar{h}}W(s)}{m}\\
&=\frac{\sum_{k=1}^Kz_k(h')|\mathcal{L}_k|W(\mathcal{L}_k)+\sum_{s\in \bar{h}'}W(s)}{m}\\
& \ge \frac{\sum_{k=1}^Kz_k(h')|\mathcal{L}_k|W(\mathcal{L}^*)}{m}\\
& =W(\mathcal{L}^*)-\frac{|\bar{h}'|W(\mathcal{L}^*)}{m}\ge W(\mathcal{L}^*)-\frac{|\bar{h}'|J_{\rm max}}{m}\\
& \ge W(\mathcal{L}^*)-\frac{LJ_{\rm max}}{m} \ge M_c(s)-\delta,
\end{aligned}
\end{equation}
where $h'\in\arg\min_{h\in Y_i}W(h)$, $\sum_{k=1}^Kz_k(h')|\mathcal{L}_k|+|\bar{h}'|=m$, $|\bar{h}'|\leq L$ and $M_c(s)=W(\mathcal{L}^*)$ for $s\in Q_i$ are used.

On the other hand, we consider a history state $h^*$ in $Y_i$ which can be decomposed into $z(h^*)\in\mathbb{N}$ directed circles $\mathcal{L}^*$ (if exist) and a remainder $\bar{h}^*$ with $|\bar{h}^*|\leq L$. Then, according to \eqref{HistoryWelfare}, \eqref{RecurrentMeasure}, \eqref{HisWExpression} and \eqref{BoundJointCycle}, for any memory $m\ge \bar{m}_\delta>0$ and $s\in Q_i$, we have
\begin{equation}\begin{aligned}\label{RCCdelta2}
W(Y_i)&=\min_{h\in Y_i}W(h)\leq W(h^*)\\
&=\frac{z(h^*)|\mathcal{L}^*|W(\mathcal{L}^*)+\sum_{s\in \bar{h}^*}W(s)}{m}\\
&\leq\frac{z(h^*)|\mathcal{L}^*|W(\mathcal{L}^*)+|\bar{h}^*|J_{\rm max}}{m}\\
&= W(\mathcal{L}^*)+\frac{|\bar{h}^*|(J_{\rm max}-W(\mathcal{L}^*))}{m}\\
&\leq W(\mathcal{L}^*)+\frac{LJ_{\rm max}}{m}\leq  M_c(s)+\delta,
\end{aligned}
\end{equation}
where $z(h^*)|\mathcal{L}^*|+|\bar{h}^*|=m$ and $|\bar{h}^*|\leq L$ are used. Therefore, \eqref{PerformaceCycle1} follows from \eqref{RCCdelta1} and \eqref{RCCdelta2}.

Regarding \ref{item:CnectPM2}, according to the definition, $\mathcal{M}(Q_i)$ is the set of SBRPs of length $m$ in the sink equilibrium $Q_i$. According to \eqref{Class2}, we obtain that $Y_i=\mathcal{M}(Q_i)$. Thus, it follows from \eqref{PathMeasure}, \eqref{MemoryMeasureEqu}, \eqref{HistoryWelfare} and \eqref{RecurrentMeasure} that $W(Y_i)=M_m(s)$ for $s\in Q_i$, which finishes the proof.
\end{proof}

Note that \lemaref{LimitDis} reveals that the stochastic stability of a
history state depends crucially on its stochastic potential defined in
\eqref{StochasticPotentialeq}. Next, we give lower and upper bounds on the
stochastic potentials of history states in the RCCs; these bounds play an
important role in our following analysis.

\begin{lema}[Bounds on the stochastic potential]\label{StocPotential}
  For each RCC $Y_i\in\mathcal{Y}$, the stochastic potential of a history
  state $h$ in $Y_i$ under the perturbed SBRD $P^{\epsilon}$, satisfies the
  following properties:
  \begin{enumerate}[label=(\roman*)]
  \item\label{item:StocP1} $\gamma(h)=\gamma(h')$ for all history states
    $h'\in Y_i$;
  \item\label{item:StocP2} $\gamma(h)$ satisfies $\bar{\gamma}_i\leq
    \gamma(h)\leq (n+1)\bar{\gamma}_i$, where
    \begin{equation*}
      \bar{\gamma}_i=\sum_{Y_j\in\mathcal{Y}\setminus\{Y_i\}}\min_{h'\in Y_j}f(p(h')).
    \end{equation*}
  \end{enumerate}
\end{lema}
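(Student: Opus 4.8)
The plan is to prove the two claims separately, relying throughout on \lemaref{RegularPer} (for the resistance values), \proref{UnperturbedProcess} (that each RCC is absorbing under $P^0$), and the three zero-resistance properties $(i)$--$(iii)$ recorded after \lemaref{RegularPer}. Part \ref{item:StocP1} is a standard consequence of the fact that any two history states in the same RCC communicate through zero-resistance paths (property $(ii)$). Given $h,h'\in Y_i$, I would take a minimum $h$-tree $g$ realizing $\gamma(h)$ and surgically convert it into an $h'$-tree: adjoin the edges of a zero-resistance path from $h$ to $h'$ inside $Y_i$, delete the now-redundant outgoing edge of $h'$ together with the old outgoing edges of the intermediate nodes on that path, and verify the result is a spanning in-tree rooted at $h'$. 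Since only zero-resistance edges are added and only nonnegative-resistance edges are deleted, $\gamma(h')\le R(g)=\gamma(h)$; the symmetric argument using a zero-resistance path from $h'$ to $h$ yields the reverse inequality, hence $\gamma(h)=\gamma(h')$.

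For the lower bound in part \ref{item:StocP2}, I would argue that every $h$-tree must ``pay to leave'' each foreign RCC. Fix an $h$-tree $g$ with $h\in Y_i$. For each $Y_j$ with $j\neq i$, pick any node of $Y_j$ and follow its unique tree-path to the root $h\notin Y_j$; let $e_j$ be the first edge on this path whose tail lies in $Y_j$ and whose head lies outside $Y_j$. Such an $e_j$ exists because the path must leave $Y_j$, and by \proref{UnperturbedProcess} this exiting transition has $P^0=0$, so \lemaref{RegularPer}\ref{item:PerturbedP2} gives $r(e_j)\ge f(p(h'))\ge\min_{h''\in Y_j}f(p(h''))$, where $h'\in Y_j$ is the tail of $e_j$. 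The edges $\{e_j\}_{j\neq i}$ have tails in pairwise disjoint RCCs and are therefore distinct, so $\gamma(h)=\sum_{e\in\mathcal{E}_g}r(e)\ge\sum_{j\neq i}r(e_j)\ge\bar{\gamma}_i$.

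For the upper bound I would exhibit a single $h$-tree of resistance at most $(n+1)\bar{\gamma}_i$. The construction routes, at zero resistance, every state of $Y_i$ to $h$ and every state of each $Y_j$ $(j\neq i)$ to a minimizer $h'_j\in\arg\min_{h'\in Y_j}f(p(h'))$ using property $(ii)$, and attaches every transient history state to some RCC along a zero-resistance path using property $(i)$. It remains to connect the representatives into a tree rooted at $h$ using one exit edge per foreign RCC. Here I would exploit the exploration mechanism: from $h'_j$ a single perturbed step can append \emph{any} chosen joint strategy $s'\in\mathcal{S}$ as the new rightmost entry, at exploration number at most $n+1$ and hence at resistance at most $(n+1)f(p(h'_j))$ by \lemaref{RegularPer}\ref{item:PerturbedP2}; choosing $s'$ so that the resulting history state flows, at zero resistance, toward $Y_i$ lets me orient each exit so that the $v-1$ chosen exit edges form an acyclic spanning in-tree of the RCC quotient rooted at $Y_i$. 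Summing, the total resistance is $\sum_{j\neq i}(n+1)f(p(h'_j))=(n+1)\bar{\gamma}_i$, which bounds $\gamma(h)$ from above.

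The main obstacle I anticipate is precisely the last step of the upper-bound construction: guaranteeing that the cheap exits, taken from the minimizing states $h'_j$, can be simultaneously oriented so as to avoid cycles among the RCCs and still reach the root $Y_i$. This is where the freedom afforded by the exploration step is essential --- because a single exploration burst can send the process from any $h'_j$ to a history state ending in an arbitrary joint strategy at cost at most $(n+1)f(p(h'_j))$, the quotient graph built from minimizer-exits is rich enough to contain a spanning in-tree rooted at $Y_i$, so no expensive non-minimizer exit is ever forced. Verifying this acyclicity and reachability claim rigorously, together with checking that the zero-resistance within-class and transient routings never conflict with the chosen exits, is the portion that requires the most care.
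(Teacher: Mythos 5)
Your proposal is correct and takes essentially the same route as the paper: your part \ref{item:StocP1} surgery is the content of Young's Lemma 2, which the paper simply cites; your lower-bound exit-counting is the paper's argument (made slightly more explicit by identifying one distinct exit edge per foreign RCC); and your upper bound is the paper's construction of a special root-tree from zero-resistance spanning in-trees within each $Y_j$, zero-resistance routing of transient states, and one exploration exit of resistance at most $(n+1)f(p(h_j^*))$ from each minimizer. The acyclicity obstacle you flag dissolves exactly along the line you sketch, and more simply than you anticipate: from each $h_j^*$ one appends a joint strategy $s'$ belonging to the root's sink equilibrium $Q_i$, and since $Q_i$ is a sink SCC the unperturbed SBRD refills the memory within $Q_i$ at zero resistance, so every exit edge lands directly in the basin of $Y_i$ and the quotient graph is a star rooted at $Y_i$ --- no inter-RCC cycles can arise, which is precisely the paper's observation that any history state whose rightmost strategy lies in $Q_i$ reaches $Y_i$ at zero resistance.
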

\begin{proof}
Regarding \ref{item:StocP1}, it directly follows from Lemma 2 in \cite{HPY:93}, that is, all history states in an RCC $Y_i$ have the same stochastic potential. 

Regarding \ref{item:StocP2}, in any $h$-tree $g$, there is at least one history state $h'_j$ in other RCC $Y_j$ $(j\neq i)$ which needs to explore. It follows from \ref{item:PerturbedP2} in \lemaref{RegularPer} that the resistance of this exploration is not less than $f(p(h'_j))$. Thus, by considering \eqref{StochasticPotentialeq} and all RCCs $Y_j\in\mathcal{Y}$ except $Y_i$, we have
\begin{equation}\label{SPequ1}
\gamma(h)\ge \sum_{j\in[v]\setminus\{i\}}f(p(h'_j))\ge \sum_{Y_j\in\mathcal{Y}\setminus\{Y_i\}}\min_{h'\in Y_j}f(p(h')).
\end{equation}

Without loss of generality, we consider the case $i=1$, i.e., the RCC $Y_1$. Next, we will construct a special $h$-tree $g_0$ rooted at some node $h\in Y_1$. In the unperturbed SBRD $P^0$, all nodes (history states) in $\mathcal{H}$ can be classified into two categories: Not-in-a-sink $\mathcal{H}_{\rm not}$ (blue) and In-a-sink $\mathcal{H}_{\rm in}$ (pink), as Fig. \ref{SP_figure}(a) shows. Once the history state enters $\mathcal{H}_{\rm in}$, it will never leave it if no exploration happens.

For the In-a-sink $\mathcal{H}_{\rm in}$, it can be decomposed into $v$ smaller disjoint parts, each of which corresponds to an RCC. In the example shown in Fig. \ref{SP_figure}(a), there are three RCCs: $Y_1$ (black), $Y_2$ (orange) and $Y_3$ (yellow), i.e., $v=3$.

Each RCC $Y_j$ has a directed spanning tree $g_j$ rooted at a node $h_j^*$ in $Y_j$, where $h_j^*\in\arg\min_{h'\in Y_j}f(p(h'))$. Also notice that all edges in $g_j$ have zero resistance. For this example, as Fig. \ref{SP_figure}(b) shows, the node $h_j^*$ is the white circle and the edges in $g_j$ are in red. Since  in this example $Y_1$ is a singleton (i.e., $h_1^*$), thus $g_1$ is empty.

Based on the unperturbed SBRD $P^0$, we classify the Not-in-a-sink $\mathcal{H}_{\rm not}$ into several smaller disjoint parts as follows: the first part $\mathcal{H}_1$ is the set of history states in $\mathcal{H}_{\rm not}$ from which there is a path of zero resistance to $Y_1$; the second part $\mathcal{H}_2$ is the set of history states in $\mathcal{H}_{\rm not}\setminus \mathcal{H}_1$ from which there is a path of zero resistance to $Y_2$; the third part $\mathcal{H}_3$ is the set of history states in $\mathcal{H}_{\rm not}\setminus(\mathcal{H}_1\cup \mathcal{H}_2)$ from which there is a path of zero resistance to $Y_3$; in this way, $\mathcal{H}_{\rm not}$ is divided into $v$ disjoint parts $\mathcal{H}_1,\mathcal{H}_2,\dots,\mathcal{H}_v$, and $\mathcal{H}_j$ may be empty. For the part $\mathcal{H}_j$ (if nonempty), we can choose the edges associated with $\mathcal{H}_j$ such that for all $h\in \mathcal{H}_j$, there is a unique path of zero resistance without cycle from $h$ to $Y_j$. For this example, as Fig. \ref{SP_figure}(c) shows, $\mathcal{H}_{\rm not}$ is divided into three parts: $\mathcal{H}_1$ (red), $\mathcal{H}_2$ (green) and $\mathcal{H}_3$ (purple). The transition paths of zero resistance from $h\in\mathcal{H}_j$ to $Y_j$ are in red.

Finally, we build the paths from the RCCs $Y_j\in\mathcal{Y}\setminus Y_1$ to $Y_1$. Note that for any $h\in \mathcal{H}$, if the rightmost $s$ of $h$ belongs to $Q_1$, then $h\in \mathcal{H}_1\cup Y_1$, because finite transitions of zero resistance can lead $h$ to $Y_1$. Since the transition probability between any two states and the probability of attaching a new one are both positive, so $h_j^*$ can adjoin any joint strategy $s$. Thus, $h_j^*$ can reach one node of $\mathcal{H}_1\cup Y_1$ by one experiment to attach a joint strategy $s$ in $Q_1$. Then, $h_j^*$ can reach $Y_1$ by continuing to move along the path by which this intermediate node moves to $Y_1$. By this way, we are able to construct an $h_1^*$-tree $g_0$. For our example, the $h_1^*$-tree $g_0$ is shown in Fig. \ref{SP_figure}(d), where the one-step transitions of nonzero resistance from $h_j^*$ $(j\neq1)$ to one node of $\mathcal{H}_1\cup Y_1$ are in dashed red lines.

Note that in the constructed $h_1^*$-tree $g_0$, there are $v-1$ explorations each of which is not greater than $(n+1)f(h_j^*)$ by the conclusion \ref{item:PerturbedP2} in \lemaref{RegularPer}. Since $\gamma(h_1^*)$ is not greater than the sum of the resistances of $g_0$ by \eqref{StochasticPotentialeq}, we have
\begin{equation}\begin{aligned}\label{SPequ2}
\gamma(h_1^*)&\leq \sum_{j\neq 1}(n+1)f(p(h_j^*))\\
&=	(n+1)\sum_{Y_j\in\mathcal{Y}\setminus\{Y_1\}}\min_{h'\in Y_j}f(p(h')),
\end{aligned}
\end{equation} 
where $h_j^*\in\arg\min_{h'\in Y_j}f(p(h'))$ is used. Thus, the conclusion follows from \eqref{SPequ1}, \eqref{SPequ2} and the conclusion \ref{item:StocP1}.
\end{proof}

\begin{figure}
	\centering
	\subfigure{
		\includegraphics[width=40mm,height=36mm]{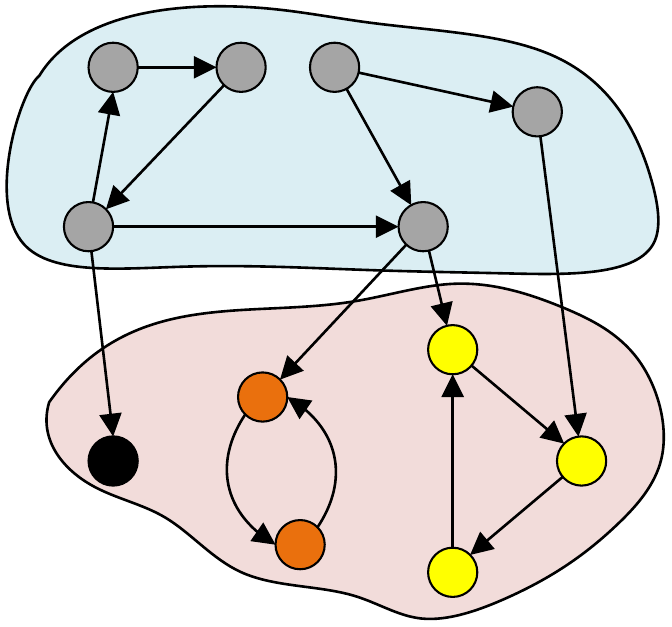}
		\put(-89,20.5){\scriptsize $Y_1$}
		\put(-55,20.5){\scriptsize $Y_2$}
		\put(-27,11){\scriptsize $Y_3$}
		\put(-121,95){\scriptsize $\mathcal{H}_{\rm not}$}
		\put(-100,10){\scriptsize $\mathcal{H}_{\rm in}$}
		\put(-97,-5){\scriptsize $(a)$ Full unperturbed SBRD}
	}
	\subfigure{
		\includegraphics[width=40mm,height=36mm]{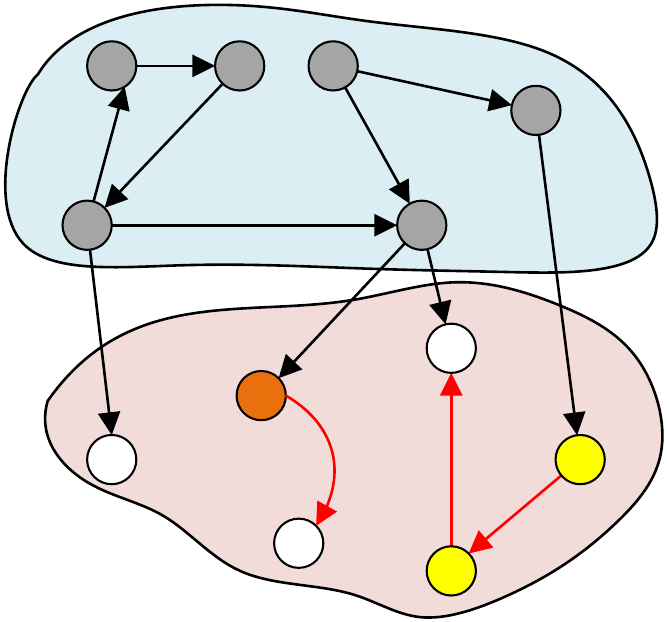}
		\put(-89.5,21){\scriptsize $h_1^*$}
		\put(-57,8){\scriptsize $h_2^*$}
		\put(-31,43){\scriptsize $h_3^*$}
		\put(-54,28){\scriptsize $g_2$}
		\put(-33,24){\scriptsize $g_3$}
		\put(-97,-5){\scriptsize $(b)$ Pruned unperturbed SBRD}
	}
	\subfigure{
		\includegraphics[width=40mm,height=36mm]{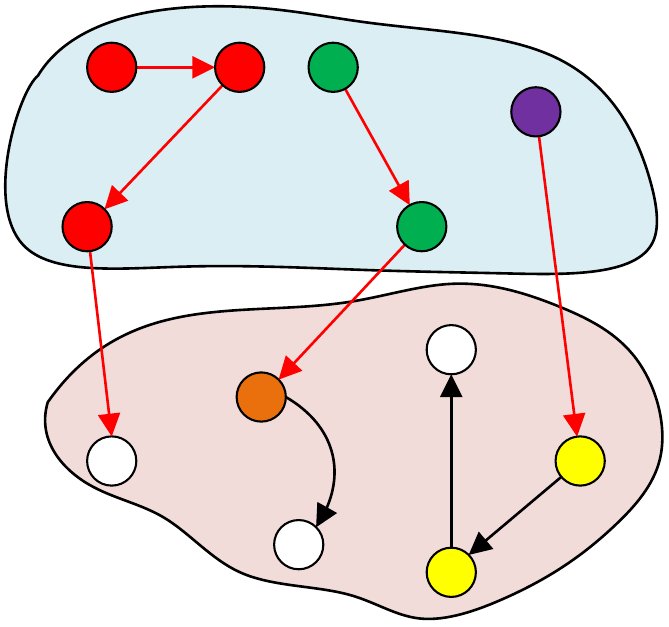}
		\put(-89.5,21){\scriptsize $h_1^*$}
		\put(-57,8){\scriptsize $h_2^*$}
		\put(-31,43){\scriptsize $h_3^*$}
		\put(-17,78){\scriptsize $\mathcal{H}_3$}
		\put(-48,78){\scriptsize $\mathcal{H}_2$}
		\put(-83,74){\scriptsize $\mathcal{H}_1$}
		\put(-97,-5){\scriptsize $(c)$ Pruned unperturbed SBRD}
	}
	\subfigure{
		\includegraphics[width=40mm,height=36mm]{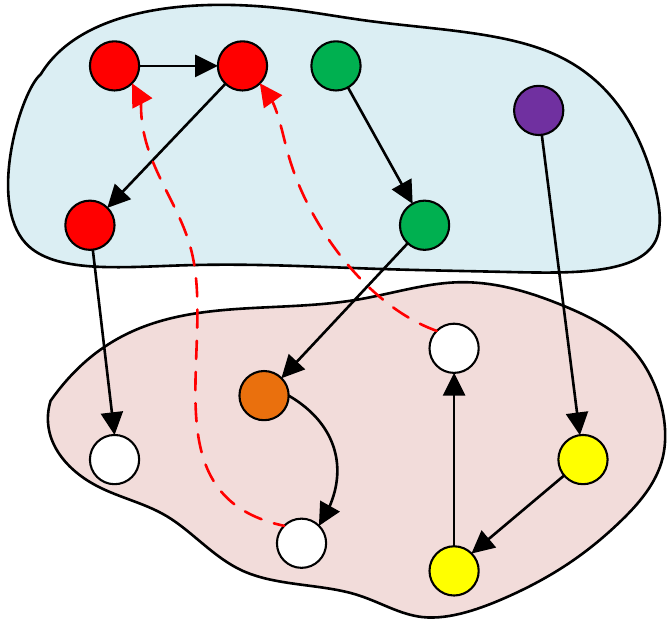}
		\put(-89.5,21){\scriptsize $h_1^*$}
		\put(-57,8){\scriptsize $h_2^*$}
		\put(-31,43){\scriptsize $h_3^*$}
		\put(-97,-5){\scriptsize $(d)$ Perturbed SBRD (an $h_1^*$-tree)}
	}
	\caption{An example for computing an upper bound of the stochastic potential for a history state in an RCC.}
	\label{SP_figure}
\end{figure}


\begin{defi}[$\delta$-function]\label{deltafunc}
	For any $\delta>0$,  a function $f_{\delta}:\mathbb{R}\to\mathbb{R}_{>0}$ is a $\delta$-function, if for any $x\in \mathbb{R}$,
	\begin{enumerate}[label=(\roman*)]
		\item\label{item:DeltaFunc1} $f_\delta(x)$ is non-decreasing in $x$;
		\item\label{item:DeltaFunc2} $f_{\delta}(x+\delta)> (vn-n+1)f_{\delta}(x)$.
	\end{enumerate}
\end{defi}

\begin{defi}[Feasible function]\label{FeasibleFunc}
	For any $\delta>0$, a function $f:\mathbb{R}^{n\times
		m}\to\mathbb{R}_{>0}$ is \emph{$\delta$-feasible} if there exists a
	$\delta$-function $f_{\delta}$ such that for
	any $M=\{M_{ij}\}\in\mathbb{R}^{n\times m}$, the equality $f(M)=f_{\delta}(\frac{1}{m}\sum_{j=1}^m\sum_{i=1}^n w_iM_{ij})$ holds.
\end{defi}

We can design a $\delta$-feasible function $f$ by constructing a $\delta$-function $f_\delta$. For example, for any $M\in\mathbb{R}^{n\times m}$, we select
\begin{equation}\label{FeasibleFunExample}
f(M)=f_{\delta}(x)=(vn-n+2)^{\frac{x}{\delta}},
\end{equation}
where $x=\frac{1}{m}\sum_{j=1}^m\sum_{i=1}^n w_iM_{ij}$. It can be verified that $f$ in \eqref{FeasibleFunExample} is a $\delta$-feasible function.

We have the following assumption on the sink equilibrium. 

\begin{asp}[Uniqueness]\label{AspUmw}
	Suppose that the sink equilibrium $Q^*$ with the maximum metric $M_c(Q^*)$ or $M_m(Q^*)$ is unique. 
\end{asp}

Next, we present the main results on stochastic stability of the perturbed SBRD and propose a class of perturbations such that the sink equilibria with the maximum cycle-based metric can be observed frequently. 

\begin{thom}[Maximum cycle-based metric]\label{ThomMCBM}
	Let $G_2=(N,\mathcal{S},J)$ be an $n$-agent meta game on a finite joint strategy space $\mathcal{S}$. The training system has a memory of length $m\in\mathbb{N}_{>0}$ and consider the cycle-based metric $M_c(s)$ in \eqref{CycleMeasureEqu} for each joint strategy $s\in\mathcal{S}$. Suppose that Assumptions \ref{PayoffBound} and \ref{AspUmw} hold. If the following conditions are all true:
	\begin{enumerate}[label=(\Roman*)]
		\item\label{item:MCMcond1} there exists a $\delta_0>0$ such that $M_c(Q^*)\ge M_c(Q_i)+\delta_0$ for all sink equilibria $Q_i\in\mathcal{Q}\setminus\{Q^*\}$;
		\item\label{item:MCMcond2} $f$ is a $\delta$-feasible function for some $\delta\in(0,\delta_0)$;
		\item\label{item:MCMcond3} the memory $m$ is
greater than or equal to
		\begin{equation}\label{boundmMwequ}
		\bar{m}_{\delta}=\frac{2LJ_{\rm max}}{\delta_0-\delta},
		\end{equation}
	\end{enumerate}
	then under the perturbed SBRD $P^{\epsilon}$, 
	\begin{enumerate}[label=(\roman*)]
		\item\label{item:MCMconc1} a history state $h$ is stochastically stable if and only if $h\in Y^*$;
		\item\label{item:MCMconc2} as $\epsilon\to 0$, a joint strategy $s\in\mathcal{S}$ is recorded in the memory subsystem with nonzero probability if and only if $s\in Q^*$,
	\end{enumerate}
	where $M_c(Q^*)=\max_{Q\in \mathcal{Q}}M_c(Q)$ and $Y^*$ is the induced RCC of $Q^*$.
\end{thom}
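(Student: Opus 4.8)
The plan is to invoke Young's characterization in \lemaref{LimitDis}: a history state $h$ is stochastically stable if and only if it minimizes the stochastic potential $\gamma(\cdot)$ of \eqref{StochasticPotentialeq}. So conclusion \ref{item:MCMconc1} reduces to showing that the set of minimizers of $\gamma$ is exactly $Y^*$. I would split this into three pieces: (a) $\gamma$ is constant on each RCC and strictly smaller on $Y^*$ than on every other RCC; (b) every transient history state (those in $\mathcal{H}_{\rm not}$, outside $\bigcup_i Y_i$) has strictly larger potential than $\gamma(Y^*)$; (c) translate the resulting history-state statement into the joint-strategy statement \ref{item:MCMconc2}. Pieces (a) and (b) together identify $Y^*$ as the unique minimizer set, yielding the equivalence in \ref{item:MCMconc1}.

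The core is piece (a), and this is where all three hypotheses enter. First I would rewrite the bounds of \lemaref{StocPotential} in terms of performances. Since $f$ is $\delta$-feasible, \defiref{FeasibleFunc} gives $f(p(h))=f_\delta(\tfrac1m\sum_{j}\sum_i w_i [p(h)]_{ij})=f_\delta(W(h))$ by \eqref{JointMeasure} and \eqref{HistoryWelfare}; and since $f_\delta$ is non-decreasing, $\min_{h'\in Y_j}f(p(h'))=f_\delta(\min_{h'\in Y_j}W(h'))=f_\delta(W(Y_j))$ by \eqref{RecurrentMeasure}. Writing $a_j=f_\delta(W(Y_j))$, \lemaref{StocPotential}\ref{item:StocP2} reads $\sum_{j\neq k}a_j\le\gamma(Y_k)\le(n+1)\sum_{j\neq k}a_j$ for each RCC $Y_k$. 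Next I would apply \lemaref{ReCClass}\ref{item:CnectPM1} with tolerance $\delta'=(\delta_0-\delta)/2$: condition \ref{item:MCMcond3} makes $m\ge\bar m_\delta=2LJ_{\rm max}/(\delta_0-\delta)=LJ_{\rm max}/\delta'$, so $|W(Y_j)-M_c(Q_j)|\le\delta'$ for all $j$. Combined with the gap hypothesis \ref{item:MCMcond1}, this gives $W(Y^*)\ge M_c(Q^*)-\delta'\ge M_c(Q_i)+\delta_0-\delta'\ge W(Y_i)+(\delta_0-2\delta')=W(Y_i)+\delta$ for each $Y_i\neq Y^*$. The defining inequality of a $\delta$-function, \defiref{deltafunc}\ref{item:DeltaFunc2}, then yields $a_*:=f_\delta(W(Y^*))\ge f_\delta(W(Y_i)+\delta)>(vn-n+1)a_i$.

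The hard part is turning this per-RCC gap into a strict comparison of potentials despite the loose factor $(n+1)$ in \lemaref{StocPotential}. Let $S=\sum_{j\neq *}a_j$ be the sum over the $v-1$ RCCs other than $Y^*$. Then $\gamma(Y^*)\le(n+1)S$, while for $Y_i\neq Y^*$ we have $\gamma(Y_i)\ge\sum_{j\neq i}a_j=a_*+S-a_i$. Since $a_j<a_*/(vn-n+1)$ for every $j\neq *$, summing gives $S<(v-1)a_*/(vn-n+1)$, so $nS+a_i<[n(v-1)+1]a_*/(vn-n+1)=a_*$; equivalently $(n+1)S<a_*+S-a_i$, hence $\gamma(Y^*)\le(n+1)S<\gamma(Y_i)$. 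This is precisely why the constant $vn-n+1$ is built into \defiref{deltafunc}. Together with \lemaref{StocPotential}\ref{item:StocP1} (constancy within each RCC) this establishes (a). For (b) I would use the standard Freidlin--Wentzell rerouting: since $Y^*$ is a sink SCC of $P^0$, no zero-resistance edge leaves it, so every spanning tree rooted at a transient $h$ must pay a positive exit resistance out of $Y^*$ that a tree rooted inside $Y^*$ avoids; rerouting produces an $h^*$-tree ($h^*\in Y^*$) of strictly smaller total resistance, whence $\gamma(h)>\gamma(Y^*)$ for every transient $h$.

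Finally, for \ref{item:MCMconc2} I would push the history-state result through \lemaref{LimitDis}, whose limit $\pi^0=\lim_{\epsilon\to0}\pi^\epsilon$ is a stationary distribution of $P^0$ supported exactly on the stochastically stable states, i.e.\ on $Y^*$. By \eqref{Class2}, every $h\in Y^*$ records only joint strategies lying in $Q^*$, so any $s\notin Q^*$ has zero limiting probability of being recorded. Conversely, since $Q^*$ is a sink SCC, every $s\in Q^*$ lies on some SBRP of length $m$ contained in $Q^*$, i.e.\ in some $h\in Y^*$ with $\pi^0_h>0$, so $s$ is recorded with nonzero probability. This yields the stated equivalence and completes \ref{item:MCMconc2}.
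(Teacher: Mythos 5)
Your proposal is correct and takes essentially the same route as the paper's own proof: \lemaref{ReCClass}\ref{item:CnectPM1} with tolerance $(\delta_0-\delta)/2$ to get $W(Y^*)>W(Y_i)+\delta$, the identity $f(p(h))=f_\delta(W(h))$ with the $(vn-n+1)$ amplification from \defiref{deltafunc}, and the bounds of \lemaref{StocPotential} fed into \lemaref{LimitDis}; your $S$-based algebra ($nS+a_i<a_*$, hence $(n+1)S<a_*+S-a_i$) is exactly the paper's computation \eqref{StochasticInequ} repackaged. The only divergence is your explicit piece (b) on transient states, which the paper leaves implicit (it invokes Theorem 4 of \cite{HPY:93} for precisely this point, in the proof of \thomref{ThomAMW}); your one-line rerouting sketch glosses the strictness of the comparison, but the claim is standard and also follows from your own estimates, since an $h$-tree rooted at a transient $h$ must pay a positive-resistance exit from every one of the $v$ RCCs, giving $\gamma(h)\ge a_*+S>(n+1)S\ge\gamma(h^*)$ for $h^*\in Y^*$ because $a_*>(vn-n+1)\max_{j\neq *}a_j$ implies $a_*>nS$.
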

\begin{proof}

Regarding \ref{item:MCMconc1}, for any $0<\delta<\delta_0$, let $\bar{m}_\delta$ be as in \eqref{boundmMwequ}. Then if $m\ge\bar{m}_\delta$, it follows from \ref{item:CnectPM1} in \lemaref{ReCClass} that
\begin{equation}\label{RRCQequ}
|W(Y_i)-M_c(Q_i)|=|W(Y_i)-M_c(s)|<\frac{\delta_0-\delta}{2},
\end{equation}
for all joint strategies $s\in Q_i$ and all RCCs $Y_i\in\mathcal{Y}$. 

For any $Y_i\in\mathcal{Y}\setminus\{Y^*\}$, it follows from \eqref{RRCQequ} and condition \ref{item:MCMcond1} that
\begin{equation}\begin{aligned}\label{RCCdisequ}
W(Y^*)&>M_c(Q^*)-\frac{\delta_0-\delta}{2}\ge M_c(Q_i)+\delta_0-\frac{\delta_0-\delta}{2}\\
&=M_c(Q_i)+\frac{\delta_0-\delta}{2}+\delta>W(Y_i)+\delta.
\end{aligned}
\end{equation}

Since $f$ is a $\delta$-feasible function, it follows from  \eqref{JointMeasure}, \eqref{HistoryWelfare} and \defiref{FeasibleFunc} that there exists a $\delta$-function $f_{\delta}$ such that 
\begin{equation}\label{change}
f(p(h))=f_{\delta}(W(h)),
\end{equation}
for all $h\in \mathcal{H}$. It follows from \eqref{RecurrentMeasure}, \eqref{RCCdisequ}, \eqref{change} and properties \ref{item:DeltaFunc1} and \ref{item:DeltaFunc2} in \defiref{deltafunc} that
\begin{equation}\begin{aligned}\label{Fdifferenceequ}
\min_{h\in Y_i}f(p(h))&=\min_{h\in Y_i}f_{\delta}(W(h))< \min_{h\in Y_i}\frac{f_\delta(W(h)+\delta)}{vn-n+1}\\
&=\frac{f_\delta(\min_{h\in Y_i}W(h)+\delta)}{vn-n+1}=\frac{f_\delta(W(Y_i)+\delta)}{vn-n+1}\\
&\leq\frac{f_\delta(W(Y^*))}{vn-n+1}= \frac{f_\delta(\min_{h\in Y^*}W(h))}{vn-n+1}\\
&= \min_{h\in Y^*}\frac{f_\delta(W(h))}{vn-n+1}=\frac{\min_{h\in Y^*}f(p(h))}{vn-n+1}.
\end{aligned}
\end{equation}

Next, we show that $\gamma(h)<\gamma(h')$ holds for all $h\in Y^*$ and all $h'\in Y_i\in\mathcal{Y}\setminus\{Y^*\}$. According to property \ref{item:StocP2} in \lemaref{StocPotential} and \eqref{Fdifferenceequ}, we have
\begin{equation}\begin{aligned}\label{StochasticInequ}
&\gamma(h)-\gamma(h')\\
&\leq (n+1)\sum_{Y_j\in\mathcal{Y}\setminus\{ Y^*\}}\min_{\hat{h}\in Y_j}f(p(\hat{h}))-\sum_{Y_j\in\mathcal{Y}\setminus\{ Y_i\}}\min_{\hat{h}\in Y_j}f(p(\hat{h}))\\
&=(n+1)\min_{\hat{h}\in Y_i}f(p(\hat{h}))-\min_{\hat{h}\in Y^*}f(p(\hat{h}))\\
&\quad+n\sum_{Y_j\in\mathcal{Y}\setminus\{ Y^*,Y_i\}}\min_{\hat{h}\in Y_j}f(p(\hat{h}))\\
&< \frac{n+1}{vn-n+1}\min_{\hat{h}\in Y^*}f(p(\hat{h}))-\min_{\hat{h}\in Y^*}f(p(\hat{h}))\\
&\quad + \frac{n(v-2)}{vn-n+1}\min_{\hat{h}\in Y^*}f(p(\hat{h}))=0.
\end{aligned}
\end{equation}
Thus, it follows from \eqref{StochasticInequ} and \lemaref{LimitDis} that $h$ is stochastically stable if and only if $h\in Y^*$.

Regarding \ref{item:MCMconc2}, it follows from conclusion \ref{item:MCMconc1} and \defiref{StableAbilityDefi} that as $\epsilon\to0$, $\pi_{h}^{\epsilon}>0$ if and only if $h\in Y^*$, implying that only $h$ in $Y^*$ can be visited with non-zero probability as $\epsilon\to0$. Since $Y^*$ is the induced RCC of $Q^*$, the conclusion directly follows from \eqref{Class2}.
\end{proof}

The memory subsystem may be required to be of very large size so that only
the joint strategies with maximum cycle-based metric can be observed with
non-zero probability. This requirement arises because $\bar{m}_{\delta}$ in
\eqref{boundmMwequ} is proportional to $L$ which can be very
large. Sometimes, the size of memory subsystem has a predefined length,
which may be less than $\bar{m}_{\delta}$. Thus, for this case, we
introduce a memory-based metric in \defiref{MBMeasure}.

Next, we propose a class of perturbations such that the joint strategies with the maximum memory-based metric can be observed frequently in the sense of stochastic stability. It is worth noting that \aspref{PayoffBound} is removed in this case, with respect to \thomref{ThomMCBM}, i.e., the bound of the payoff function is unnecessary. 

\begin{thom}[Maximum memory-based metric]\label{ThomMMBM}
	Let $G_2=(N,\mathcal{S},J)$ be an $n$-agent meta game on a finite joint strategy space $\mathcal{S}$. The training system has a memory of predefined length $m\in\mathbb{N}_{>0}$, and consider the memory-based metric $M_m(s)$ in \eqref{MemoryMeasureEqu} for each joint strategy $s\in\mathcal{S}$. Suppose that Assumption \ref{AspUmw} holds. If the following conditions are all true:
	\begin{enumerate}[label=(\Roman*)]
		\item\label{item:MMMcond1} there exists a $\delta_0>0$ such that $M_m(Q^*)\ge M_m(Q_i)+\delta_0$ for all sink equilibria $Q_i\in\mathcal{Q}\setminus\{Q^*\}$;
		\item\label{item:MMMcond2} $f$ is a $\delta$-feasible function for some $\delta\in(0,\delta_0)$;
	\end{enumerate} 
	then under the perturbed SBRD $P^{\epsilon}$, 
	\begin{enumerate}[label=(\roman*)]
		\item\label{item:MMMconc1} a history state $h$ is stochastically stable if and only if $h\in Y^*$;
		\item\label{item:MMMconc2} as $\epsilon\to 0$, a joint strategy $s\in\mathcal{S}$ is recorded in the memory subsystem with nonzero probability if and only if $s\in Q^*$,
	\end{enumerate}
	where $M_m(Q^*)=\max_{Q\in \mathcal{Q}}M_m(Q)$ and $Y^*$ is the induced RCC of $Q^*$.
\end{thom}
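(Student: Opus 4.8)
The plan is to follow the same architecture as the proof of \thomref{ThomMCBM}, but to exploit the \emph{exact} identity $W(Y_i)=M_m(Q_i)$ furnished by \ref{item:CnectPM2} in \lemaref{ReCClass}, which replaces the $\delta$-approximate bound used in the cycle-based case. This exact identity is precisely why \aspref{PayoffBound} and any lower bound on the memory length $m$ become unnecessary here: the welfare $W(Y_i)$ of each RCC equals the underlying memory-based metric on the nose, so no error term has to be absorbed by choosing $m$ large, and the separation we need will hold for the \emph{predefined} $m$.

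First I would establish the separation inequality $W(Y^*)>W(Y_i)+\delta$ for every RCC $Y_i\in\mathcal{Y}\setminus\{Y^*\}$. By \ref{item:CnectPM2} in \lemaref{ReCClass}, $W(Y^*)=M_m(Q^*)$ and $W(Y_i)=M_m(Q_i)$. Condition \ref{item:MMMcond1} then gives $W(Y^*)=M_m(Q^*)\ge M_m(Q_i)+\delta_0=W(Y_i)+\delta_0$, and since $\delta<\delta_0$ by condition \ref{item:MMMcond2}, this yields $W(Y^*)>W(Y_i)+\delta$ directly, with no intermediate $\frac{\delta_0-\delta}{2}$ slack and no appeal to $J_{\rm max}$ or $L$.

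Next I would translate this welfare gap into a gap for the feasible function. Since $f$ is $\delta$-feasible, \defiref{FeasibleFunc} together with \eqref{JointMeasure} and \eqref{HistoryWelfare} supplies a $\delta$-function $f_\delta$ with $f(p(h))=f_\delta(W(h))$ for all $h\in\mathcal{H}$. Using the monotonicity and growth properties of $f_\delta$ from \defiref{deltafunc}, the separation inequality implies $\min_{h\in Y_i}f(p(h))<\frac{1}{vn-n+1}\min_{h\in Y^*}f(p(h))$, exactly as in the chain \eqref{Fdifferenceequ}. From here the bounds on the stochastic potential in \ref{item:StocP2} of \lemaref{StocPotential} give $\gamma(h)<\gamma(h')$ for every $h\in Y^*$ and every $h'\in Y_i\neq Y^*$, by repeating the telescoping estimate \eqref{StochasticInequ}. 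Applying \lemaref{LimitDis} then yields conclusion \ref{item:MMMconc1}, and conclusion \ref{item:MMMconc2} follows from \defiref{StableAbilityDefi} and the definition \eqref{Class2} of $Y^*$, exactly as in the final paragraph of the proof of \thomref{ThomMCBM}.

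The argument is in fact strictly easier than the cycle-based case, so I do not anticipate a genuine obstacle; the only point requiring care is to confirm that the exact identity in \ref{item:CnectPM2} of \lemaref{ReCClass} legitimately supplants the approximate bound, so that $W(Y^*)>W(Y_i)+\delta$ holds for the predefined $m$ rather than only for $m$ above some threshold. Once that substitution is justified, every subsequent step---the feasible-function gap, the stochastic-potential comparison, and the invocation of \lemaref{LimitDis}---is identical to \thomref{ThomMCBM}.
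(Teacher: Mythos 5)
Your proposal is correct and matches the paper's own proof essentially step for step: the paper likewise invokes the exact identity $W(Y_i)=M_m(Q_i)$ from \ref{item:CnectPM2} of \lemaref{ReCClass} to obtain $W(Y^*)\ge W(Y_i)+\delta_0>W(Y_i)+\delta$ directly, then reuses the chain \eqref{change}--\eqref{StochasticInequ} from \thomref{ThomMCBM} and concludes via \lemaref{LimitDis}. Your observation that the exact identity is what renders \aspref{PayoffBound} and the memory lower bound unnecessary is precisely the point the paper makes in the remark preceding the theorem.
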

\begin{proof}
According to property \ref{item:CnectPM2} in \lemaref{ReCClass}, $W(Y_i)=M_m(Q_i)$ for all $Y_i\in\mathcal{Y}$. Thus, it follows from the conditions \ref{item:MMMcond1} and \ref{item:MMMcond2} that
\begin{equation}\begin{aligned}
W(Y^*)&=M_m(Q^*)\ge M_m(Q_i)+\delta_0\\
&\ge W(Y_i)+\delta_0>W(Y_i)+\delta,
\end{aligned}\end{equation}
for all RCCs $Y_i\in\mathcal{Y}\setminus\{Y^*\}$. By following the same argument \eqref{change}-\eqref{StochasticInequ} in the proof of \thomref{ThomMCBM}, we can obtain that $\gamma(h)<\gamma(h')$ for all $h\in Y^*$ and all $h'\in Y_i\in\mathcal{Y}\setminus\{Y^*\}$. Thus, the conclusions \ref{item:MMMconc1} and \ref{item:MMMconc2} follow from \lemaref{LimitDis} and the proof of \thomref{ThomMCBM}.
\end{proof}

\subsection{Approximation of the Maximum Metrics}
In this subsection, we remove the prior information about the lower bound
of the difference between the maximum and second maximum metrics in all
sink equilibria, i.e., $\delta_0$ is unknown. This relaxation is important
because the condition \ref{item:MCMcond1} in \thomref{ThomMCBM} and
condition \ref{item:MMMcond1} in \thomref{ThomMMBM} both need such a
$\delta_0$ which sometimes is hard to find. Conversely, we introduce a
design parameter $\bar{\delta}$ specifying the difference between the
metric of the sink equilibrium related to the stochastically stable states
and the maximum metric. Thus, the preassigned parameter $\bar{\delta}$
actually characterizes the stochastic stability of the joint strategies
observed with non-zero probability.

\begin{thom}[Approximation of maximum cycle-based metric]\label{ThomAMW}
	Let $G_2=(N,\mathcal{S},J)$ be an $n$-agent meta game on a finite joint strategy space $\mathcal{S}$. The training system has a memory of length $m\in\mathbb{N}_{>0}$, and consider the cycle-based metric $M_c(s)$ in \eqref{CycleMeasureEqu} for each joint strategy $s\in\mathcal{S}$. Suppose that Assumptions \ref{PayoffBound} and \ref{AspUmw} hold. For any given $\bar{\delta}>0$, if the following conditions are all true:
	\begin{enumerate}[label=(\Roman*)]
		\item $f$ is a $\delta$-feasible function for some $\delta\in(0,\bar{\delta})$;
		\item the memory $m$ is not less than $\bar{m}_\delta$ given by  
		\begin{equation}\label{boundAmMwequ}
		\bar{m}_{\delta}=\frac{2LJ_{\rm max}}{\bar{\delta}-\delta},
		\end{equation}
	\end{enumerate}
	then under the perturbed SBRD $P^\epsilon$, as $\epsilon\to 0$, every joint strategy $s\in\mathcal{S}$ which is recorded in the memory with nonzero probability, belongs to a sink equilibrium $Q_i$ satisfying
	\begin{equation}
	M_c(s)=M_c(Q_i)\ge M_c(Q^*)-\bar{\delta},
	\end{equation}
	where $M_c(Q^*)=\max_{Q\in \mathcal{Q}}M_c(Q)$.
\end{thom}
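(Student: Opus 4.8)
The plan is to argue by contraposition, running the separation argument of \thomref{ThomMCBM} with $\bar\delta$ playing the role formerly played by the known gap $\delta_0$: I will show that any sink equilibrium whose cycle-based metric lies more than $\bar\delta$ below the maximum cannot host stochastically stable history states. Concretely, let $h$ be an arbitrary stochastically stable history state, say $h\in Y_i$ for some RCC $Y_i\in\mathcal{Y}$, and suppose toward a contradiction that $M_c(Q_i)<M_c(Q^*)-\bar\delta$. The goal is to force $\gamma(h')<\gamma(h)$ for every $h'\in Y^*$, which contradicts the stochastic stability of $h$ via the minimum-potential characterization of \lemaref{LimitDis}.

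First I would convert the assumed metric gap into a performance gap between the two RCCs. Since $m\ge\bar m_\delta=\frac{2LJ_{\rm max}}{\bar\delta-\delta}$, invoking property \ref{item:CnectPM1} in \lemaref{ReCClass} with tolerance $\frac{\bar\delta-\delta}{2}$ gives $|W(Y_j)-M_c(Q_j)|\le\frac{\bar\delta-\delta}{2}$ for every RCC $Y_j$. Applying this to $Y^*$ and $Y_i$ and using the standing assumption $M_c(Q_i)<M_c(Q^*)-\bar\delta$, the terms collect to
\begin{equation*}
W(Y^*)-W(Y_i)>\bar\delta-(\bar\delta-\delta)=\delta,
\end{equation*}
which is exactly the performance separation $W(Y^*)>W(Y_i)+\delta$ that drives the proof of \thomref{ThomMCBM}.

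From here the argument transfers verbatim. Using the $\delta$-feasibility of $f$ together with the identity $f(p(h))=f_\delta(W(h))$ established in \eqref{change} and properties \ref{item:DeltaFunc1}--\ref{item:DeltaFunc2} of the $\delta$-function, the performance separation propagates to a multiplicative separation of exploration costs, $\min_{h\in Y_i}f(p(h))<\frac{1}{vn-n+1}\min_{h\in Y^*}f(p(h))$, exactly as in \eqref{Fdifferenceequ}. Substituting this into the stochastic-potential bounds of property \ref{item:StocP2} in \lemaref{StocPotential} and repeating the chain of inequalities \eqref{StochasticInequ} yields $\gamma(h')-\gamma(h)<0$ for $h'\in Y^*$ and $h\in Y_i$. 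By \lemaref{LimitDis} this means $h$ is not of minimum stochastic potential and hence not stochastically stable, the desired contradiction.

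Consequently every stochastically stable history state lies in an RCC $Y_i$ with $M_c(Q_i)\ge M_c(Q^*)-\bar\delta$. By \defiref{StableAbilityDefi}, a joint strategy is recorded with nonzero probability as $\epsilon\to0$ only if it appears in some stochastically stable history state, and by the definition of the RCC in \eqref{Class2} every strategy appearing in such a state belongs to the associated sink equilibrium; therefore $M_c(s)=M_c(Q_i)\ge M_c(Q^*)-\bar\delta$, as claimed. I expect the only delicate point to be the arithmetic bookkeeping of the tolerance: one must match $\frac{\bar\delta-\delta}{2}$ to the memory threshold $\bar m_\delta$ so that the residual gap surviving \lemaref{ReCClass} remains strictly larger than $\delta$. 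Once that is secured, no genuinely new obstacle appears, since the $\delta$-feasibility and potential-bound machinery is inherited directly from \thomref{ThomMCBM}.
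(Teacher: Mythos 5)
Your proposal is correct and follows essentially the same route as the paper: assume a stochastically stable state sits over a sink equilibrium violating $M_c(Q_i)\ge M_c(Q^*)-\bar\delta$, use \lemaref{ReCClass}\ref{item:CnectPM1} with tolerance $\frac{\bar\delta-\delta}{2}$ (enabled by $m\ge\bar m_\delta$) to get $W(Y^*)>W(Y_i)+\delta$, and then rerun the machinery \eqref{change}--\eqref{StochasticInequ} of \thomref{ThomMCBM} to contradict stochastic stability via \lemaref{LimitDis}. The only step you pass over silently is that a stochastically stable history state must lie in some RCC $Y_i$ (your ``say $h\in Y_i$''), which the paper justifies by citing Theorem 4 of Young \cite{HPY:93}.
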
 

\begin{proof}

Take a sink equilibrium $Q_j$ from $\mathcal{Q}$. Suppose that
\begin{equation}\label{ConverseEqu}
M_c(Q_j)<M_c(Q^*)-\bar{\delta},
\end{equation}
where $M_c(Q^*)=\max_{Q\in \mathcal{Q}}M_c(Q)$. We let $Y^*$ denote the induced RCC of $Q^*$.

Let $\bar{m}_\delta$ be as in \eqref{boundAmMwequ}. Then, if $m\ge\bar{m}_\delta$, it follows from the property \ref{item:CnectPM1} in \lemaref{ReCClass} that
\begin{equation}\label{RRCQequ2}
|W(Y_i)-M_c(Q_i)|<\frac{\bar{\delta}-\delta}{2},
\end{equation}
for all RCCs $Y_i\in\mathcal{Y}$. Then, by \eqref{ConverseEqu} and \eqref{RRCQequ2}, we have
\begin{equation*}\begin{aligned}
W(Y^*)&>M_c(Q^*)-\frac{\bar{\delta}-\delta}{2}\ge M_c(Q_j)+\bar{\delta}-\frac{\bar{\delta}-\delta}{2}\\
&=M_c(Q_j)+\frac{\bar{\delta}-\delta}{2}+\delta>W(Y_j)+\delta,
\end{aligned}
\end{equation*}
which is similar to \eqref{RCCdisequ}. By following the same arguments as in \eqref{change}-\eqref{StochasticInequ}, we have $\gamma(h)<\gamma(h')$ for all $h\in Y^*$ and all $h'\in Y_j\in\mathcal{Y}\setminus\{Y^*\}$. Thus, \lemaref{LimitDis} implies that for any $h\in Y_j$, $h$ is not stochastically stable.

If $h$ is stochastically stable, it follows from Theorem 4 in \cite{HPY:93} that $h$ is contained in an RCC $Y_i$. The sink equilibrium $Q_i$ related to $Y_i$ cannot satisfies \eqref{ConverseEqu}, because the above analysis shows that \eqref{ConverseEqu} will lead to a contradiction, i.e., $h$ is not stochastically stable. Thus, we have 
\begin{equation}\label{ConverseEqu2}
M_c(Q_i)\ge M_c(Q^*)-\bar{\delta}.
\end{equation}
The conclusion directly follows from the fact that as $\epsilon\to 0$, only the joint strategies contained in stochastic stable history states can be observed in the memory with nonzero probability. 
\end{proof}

Then, we have the following corollary.

\begin{cor}[Approximation of maximum memory-based metric]
	\label{ThomMMBMA}
	Let $G_2=(N,\mathcal{S},J)$ be an $n$-agent meta game on a finite joint strategy space $\mathcal{S}$. The training system has a memory of predefined length $m\in\mathbb{N}_{>0}$, and consider the memory-based metric $M_m(s)$ in \eqref{MemoryMeasureEqu} for each joint strategy $s\in\mathcal{S}$. Suppose that Assumption \ref{AspUmw} holds. For any given $\bar{\delta}>0$, if $f$ is a $\delta$-feasible function for some $\delta\in(0,\bar{\delta})$,
	then under the perturbed SBRD $P^\epsilon$, as $\epsilon\to 0$, every joint strategy $s\in\mathcal{S}$ which is recorded in the memory with nonzero probability, belongs to a sink equilibrium $Q_i$ satisfying
	\begin{equation}
	M_m(s)=M_m(Q_i)\ge M_m(Q^*)-\bar{\delta},
	\end{equation}
	where $M_m(Q^*)=\max_{Q\in \mathcal{Q}}M_m(Q)$.
\end{cor}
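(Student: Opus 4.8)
The plan is to mirror the proof of \thomref{ThomAMW}, but to exploit the fact that the memory-based metric admits an \emph{exact} characterization of each RCC's performance; this is exactly what lets us dispense with \aspref{PayoffBound} and with any memory-length bound such as \eqref{boundAmMwequ}. I would argue by contraposition over the sink equilibria. Fix a sink equilibrium $Q_j\in\mathcal{Q}$ with induced RCC $Y_j$, and suppose $M_m(Q_j)<M_m(Q^*)-\bar{\delta}$. The key input is property \ref{item:CnectPM2} in \lemaref{ReCClass}, which gives the identity $W(Y_i)=M_m(Q_i)$ for every $Y_i\in\mathcal{Y}$ with no error term. Hence $W(Y^*)=M_m(Q^*)$ and $W(Y_j)=M_m(Q_j)$, so the assumption together with $\delta<\bar{\delta}$ yields directly
\begin{equation*}
W(Y^*)=M_m(Q^*)>M_m(Q_j)+\bar{\delta}=W(Y_j)+\bar{\delta}>W(Y_j)+\delta,
\end{equation*}
which is the memory-based counterpart of \eqref{RCCdisequ}.

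Next I would reuse the chain \eqref{change}--\eqref{StochasticInequ} from the proof of \thomref{ThomMCBM} essentially verbatim. Since $f$ is $\delta$-feasible, \defiref{FeasibleFunc} provides a $\delta$-function $f_{\delta}$ with $f(p(h))=f_{\delta}(W(h))$ for all $h\in\mathcal{H}$. Feeding the strict inequality $W(Y^*)>W(Y_j)+\delta$ into the monotonicity and growth properties \ref{item:DeltaFunc1}--\ref{item:DeltaFunc2} of \defiref{deltafunc} gives $\min_{h\in Y_j}f(p(h))<\frac{1}{vn-n+1}\min_{h\in Y^*}f(p(h))$, and the stochastic-potential bounds \ref{item:StocP2} in \lemaref{StocPotential} then force $\gamma(h)<\gamma(h')$ for every $h\in Y^*$ and every $h'\in Y_j$. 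By \lemaref{LimitDis}, no history state in $Y_j$ is stochastically stable.

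Finally I would assemble the implication exactly as in \thomref{ThomAMW}. Every stochastically stable history state is contained in some RCC $Y_i$ (by \cite[Theorem 4]{HPY:93}), and the step above shows that the associated $Q_i$ cannot violate the bound, i.e.\ it cannot satisfy $M_m(Q_i)<M_m(Q^*)-\bar{\delta}$; hence $M_m(s)=M_m(Q_i)\ge M_m(Q^*)-\bar{\delta}$ for every $s\in Q_i$. Since, as $\epsilon\to0$, only joint strategies lying in stochastically stable history states are recorded in the memory with nonzero probability, the claim follows. I do not anticipate any real obstacle here: the entire content is that the exact equality in property \ref{item:CnectPM2} of \lemaref{ReCClass} removes the two-sided approximation error that \thomref{ThomAMW} had to control using \aspref{PayoffBound} and the memory bound \eqref{boundAmMwequ}, so this statement is a genuine simplification of the cycle-based argument rather than a new one, which is why it is stated as a corollary.
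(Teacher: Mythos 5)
Your proposal is correct and takes essentially the same approach as the paper: the paper's own proof is just the one-line remark that the corollary follows by combining the proofs of Theorems \ref{ThomMMBM} and \ref{ThomAMW}, and your argument is precisely that combination spelled out, i.e., the exact identity $W(Y_i)=M_m(Q_i)$ from property \ref{item:CnectPM2} of \lemaref{ReCClass} substituting for the approximation step \eqref{RRCQequ2} (thereby dropping \aspref{PayoffBound} and the memory bound), followed by the contraposition and stochastic-stability argument of \thomref{ThomAMW}. Nothing further is needed.
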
 
\begin{proof}
This corollary is straightforward by combining the proofs of Theorems \ref{ThomMMBM} and \ref{ThomAMW}.
\end{proof}

\section{Conclusion}\label{sec:conclusion}
In this paper, we introduced two multi-agent policy evaluation metrics
based on the concept of sink equilibrium. The proposed framework is
practical and general in the sense that it can be easily applied to a large
class of stochastic meta-games. Compared with existing popular evaluation
methods, such as Nash equilibria, Elo ratings and $\alpha$-rankings, our
evaluation metrics can handle dynamical cyclical behaviors and are
compatible with single-agent reinforcement learning. In particular, the
memory-based metric is realistic settings with limitations on amount of
available memory.

We have shown that learning algorithms designed to seek coarse correlated
equilibria cannot be directly applied to compute sink equilibria with good
properties by demonstrating that these two concepts are not special cases
of each other. Then, we introduced a training system which consists of a
learning subsystem and a memory subsystem for multi-agent policy
seeking. Under some mild conditions, we proposed perturbed SBRD algorithms
which guarantee that only the joint strategies with the maximum underlying
metrics are observed with non-zero probability in the memory subsystem. We
also modified the training algorithms for learning joint strategies with
metrics close to the maximum by relaxing some conditions. Future work
involves applying the perturbed SBRD algorithms in practical scenarios.

\bibliographystyle{plainurl}
\bibliography{alias,references}

\end{document}